\newtheorem{theorem}{Theorem}
\newtheorem{definition}{Definition}
\newtheorem{example}[definition]{Example}
\def\qed{\ifvmode\Realemovelastskip\fi
{\unskip\nobreak\hfil\penalty50\hbox{}\nobreak\hfil \hbox{\vrule
height1.2ex width1.2ex}\parfillskip=0pt \finalhyphendemerits=0
\par\smallskip}}
\def\qedr{\ifvmode\Realemovelastskip\fi
{\unskip\nobreak\hfil\penalty50\hbox{}\nobreak\hfil \hbox{
$\diamond$}\parfillskip=0pt \finalhyphendemerits=0
\par\smallskip}}
\def\ds{\displaystyle}
\newenvironment{proof}{\noindent{\sl Proof:~~~}}{\quad \qed}
\def\beq{\begin{equation}}
\def\eeq{\end{equation}}
\def\bea{\begin{eqnarray}}
\def\eea{\end{eqnarray}}
\def\beann{\begin{eqnarray*}}
\def\eeann{\end{eqnarray*}}
\def\beasn{\begin{sneqnarray}}
\def\eeasn{\end{sneqnarray}}
\def\ben{\begin{enumerate}}
\def\een{\end{enumerate}}
\def\bit{\begin{itemize}}
\def\eit{\end{itemize}}
\def\derpar#1#2{\displaystyle\frac{\partial{#1}}{\partial{#2}}}
\def\restric#1#2{\left.#1\right|_{#2}}
\def\vf{{\mathfrak{X}}}
\def\Real{\mathbb{R}}
\def\R{\mathbb{R}}
\def\Tan{{\rm T}}
\def\qed{\ifvmode\removelastskip\fi
{\unskip\nobreak\hfil\penalty50\hbox{}\nobreak\hfil \hbox{\vrule
height1.2ex width1.2ex}\parfillskip=0pt \finalhyphendemerits=0
\par\smallskip}}
\begin{document}

\centerline{\bf A Hamilton-Jacobi formalism for higher order implicit systems}

\vskip 0.2cm
\vskip 0.5cm
\vskip 0.5cm

\centerline{O\u{g}ul Esen$^{\dagger}$, Manuel de Le\'on$^{\ddagger}$, Cristina
Sard\'on$^{*}$}
\vskip 0.5cm

\centerline{Department of Mathematics$^{\dagger}$}
\centerline{Gebze Technical University}
\centerline{41400 Gebze, Kocaeli, Turkey.}
\vskip 0.5cm

\centerline{Instituto de Ciencias Matem\'aticas, Campus Cantoblanco$^{\ddagger}$}
\centerline{Consejo Superior de Investigaciones Cient\'ificas}
\centerline{Real Academia Espa{\~n}ola de las Ciencias}
\centerline{C/ Nicol\'as Cabrera, 13--15, 28049, Madrid. SPAIN}
\vskip 0.5cm

\centerline{Instituto de Ciencias Matem\'aticas, Campus Cantoblanco$^{*}$}
\centerline{Consejo Superior de Investigaciones Cient\'ificas}
\centerline{C/ Nicol\'as Cabrera, 13--15, 28049, Madrid. SPAIN}

\begin{abstract}
In this paper, we present a generalization of a Hamilton--Jacobi theory to higher order implicit differential equations. We propose two different backgrounds to deal with higher order implicit Lagrangian theories: the Ostrogradsky approach and the Schmidt transform, which convert a higher order Lagrangian into a first order one. The Ostrogradsky approach involves the addition of new independent variables to account for higher order derivatives, whilst the Schmidt transform adds gauge invariant terms to the Lagrangian function. 
In these two settings, the implicit character of the resulting equations will be treated in two different ways in order to provide a Hamilton--Jacobi equation. On one hand, the implicit differential equation will be a Lagrangian submanifold of a higher order tangent bundle and it is generated by a Morse family. On the other hand, we will rely on the existence of an auxiliary section of a certain bundle that allows the construction of local vector fields, even if the differential equations are implicit.
We will illustrate some examples of our proposed schemes, and discuss the applicability of the proposal.

\textbf{Keywords}: Hamilton--Jacobi, second order Lagrangians, Schmidt-Legendre transformation; Ostragradsky-Legendre transformation; implicit differential equations;  Morse families; higher order Lagrange equations; constrained Lagrangians.
\end{abstract}

%
%
%
%
%


\section{Introduction}
\label{sec:Introduction}

The Hamilton--Jacobi theory (HJ theory) is a very useful theory for the study of dynamical systems. It is rooted in the idea of finding an appropriate canonical
transformation \cite{Arn,JS} that leads the system to equilibrium with a trivial Hamiltonian, and pairs of action-angle variables that render the dynamics trivial.
This philosophy has brought many interesting results, deriving into integrability theories, reduction, KAM theory, among others \cite{Gomis1,EJLS,sorren,Su}.
Our interest resides in the geometric interpretation of this theory \cite{Ab,LM87,MMM}, its formulation, and applications. See for example
\cite{art:Carinena_Gracia_Marmo_Martinez_Munoz_Roman06}, where a
geometric framework for the HJ theory was presented
and the Hamilton--Jacobi equation was formulated both in the
Lagrangian and in the Hamiltonian formalisms of autonomous and
nonautonomous mechanics.

\bigskip

\noindent\textbf{{Geometric HJ theory.}} The time-independent HJ theory is a partial differential equation for a generating function $W$,
 \begin{equation}\label{hje}
H\left(q^{A},\frac{\partial W}{\partial q^{A}}\right)=E
 \end{equation}
on a $n$-dimensional configuration space $Q$ with local coordinates $(q^{A})$, and $E$ is the energy. This generating function comes from a separable generating function in $t$, i.e., $G_2=W(q^{A})-Et$ where the momentum in the canonical transformation corresponds with $p_A=\partial W/\partial q^A$. 
The equation \eqref{hje} can be interpreted geometrically in the following way. First, consider a Hamiltonian vector field $X_H$ on $T^{*}Q$, and a one-form $\gamma$ on $Q$.
We define a vector field $X_H^{\gamma}$ on $Q$ by
\begin{equation}\label{gammarelated}
 X_H^{\gamma}=T\pi\circ X_H\circ \gamma.
\end{equation}
This definition implies the commutativity of the following diagram.
\begin{equation}\label{Xgg}
  \xymatrix{ T^{*}Q
\ar[dd]^{\pi_{Q}} \ar[rrr]^{X_H}&   & &TT^{*}Q\ar[dd]^{T\pi_{Q}}\\
  &  & &\\
 Q\ar@/^2pc/[uu]^{\gamma}\ar[rrr]^{X_H^{\gamma}}&  & & TQ }
\end{equation}
We state the geometric Hamilton-Jacobi theorem as follows.

\begin{theorem} \label{HJT}
Assume that we have a closed one-form $\gamma=dW$ on $Q$, then the following two conditions are equivalent:

\begin{enumerate}
\item The vector fields $X_{H}$ and $X_{H}^{\gamma }$ are $\gamma$-related, that is

\begin{equation}
T\gamma(X^{\gamma})=X\circ\gamma.
\end{equation}
\item and the following equation is fulfilled
$$d\left( H\circ \gamma \right)=0.$$
\end{enumerate}
Under these conditions, we will say that $\gamma$ is a solution of the Hamilton--Jacobi problem.
\end{theorem}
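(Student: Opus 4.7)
The plan is to reformulate condition (1) as the assertion that the Hamiltonian vector field $X_H$ is tangent to the image $\gamma(Q)\subset T^*Q$ along $\gamma(Q)$, and then to exploit the Lagrangian character of $\gamma(Q)$ (guaranteed by the hypothesis that $\gamma$ is closed) to convert that tangency into condition (2). To unpack (1), I would note that by \eqref{gammarelated} one has $X_H^{\gamma}(q)=T\pi\bigl(X_H(\gamma(q))\bigr)$; since $\pi\circ\gamma=\mathrm{id}_Q$, the composition $T\gamma\circ T\pi$ acts as the identity on vectors already tangent to $\gamma(Q)$. Consequently, the equality $T\gamma(X_H^{\gamma}(q))=X_H(\gamma(q))$ is equivalent to the pointwise tangency $X_H(\gamma(q))\in T_{\gamma(q)}\gamma(Q)$.

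Next I would invoke the hypothesis that $\gamma$ is closed. Writing $\theta$ for the tautological one-form on $T^*Q$ and $\omega=-d\theta$ for the canonical symplectic form, the identity $\gamma^{*}\theta=\gamma$ yields $\gamma^{*}\omega=-d\gamma=0$. Hence $\gamma(Q)$ is a Lagrangian submanifold, and its tangent space at each point coincides with its own $\omega$-orthogonal complement.

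Finally I would bridge the two conditions through $\iota_{X_H}\omega=dH$. For any $Y_q\in T_qQ$, a short computation gives
\[
\gamma^{*}(dH)(Y_q)=dH_{\gamma(q)}\bigl(T\gamma(Y_q)\bigr)=\omega_{\gamma(q)}\bigl(X_H(\gamma(q)),\,T\gamma(Y_q)\bigr).
\]
As $Y_q$ varies, the vectors $T\gamma(Y_q)$ sweep out the Lagrangian subspace $T_{\gamma(q)}\gamma(Q)$, so the right-hand side vanishes for every $Y_q$ precisely when $X_H(\gamma(q))$ itself lies in that subspace. Combined with the first paragraph, this yields the desired equivalence between $d(H\circ\gamma)=0$ and condition (1).

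The step that I expect to require the most care is the reformulation in the first paragraph: one must carefully check that $T\gamma\circ T\pi$ acts as the identity on $T_{\gamma(q)}\gamma(Q)$, using $\pi\circ\gamma=\mathrm{id}_Q$ and the chain rule. Beyond that, the argument reduces to the standard observation that a vector $v$ annihilates a Lagrangian subspace $L$ under $\omega$ if and only if $v\in L$.
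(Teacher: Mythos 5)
Your proof is correct and complete. Note, however, that the paper itself states Theorem \ref{HJT} without proof, as review material quoted from the literature; the only proof in the same spirit that the paper supplies is that of its generalization, Theorem \ref{nHJT}, and that proof is a purely local coordinate computation: condition (1) is written out as the system \eqref{lift}, the differential $dE(q_{(\kappa)},\gamma^{(\kappa)},\lambda)$ is expanded in coordinates as in \eqref{pfM}, and the two are matched term by term using the closure relations $\partial\gamma^{(\kappa)}_A/\partial q^B_{(\hat\kappa)}=\partial\gamma^{(\hat\kappa)}_B/\partial q^A_{(\kappa)}$. Your argument is the intrinsic alternative: you characterize condition (1) as tangency of $X_H\circ\gamma$ to $\operatorname{Im}(\gamma)$ (the verification that $T\gamma\circ T\pi$ is the identity on $T_{\gamma(q)}\gamma(Q)$ is exactly right, since every such vector is $T\gamma(v)$ and $T(\pi\circ\gamma)=\mathrm{id}$), you use $\gamma^{*}\theta=\gamma$ to see that closedness of $\gamma$ makes $\operatorname{Im}(\gamma)$ Lagrangian, and you convert $d(H\circ\gamma)=\gamma^{*}(\iota_{X_H}\omega)$ into the statement that $X_H\circ\gamma$ lies in the $\omega$-orthogonal of $T\gamma(T_qQ)$, which equals $T\gamma(T_qQ)$ itself. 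This buys a cleaner, manifestly coordinate-free equivalence in both directions and makes transparent the role of the Lagrangian-submanifold condition that the paper emphasizes conceptually elsewhere; the paper's coordinate route, by contrast, directly produces the explicit PDE system (such as \eqref{HJfor2ndOr2--}) that is used in the applications. One minor caution: fix a sign convention for $\omega$ relative to $d\theta$ and for $\iota_{X_H}\omega=\pm dH$ at the outset, since your argument uses both $\gamma^{*}\omega=0$ and $\iota_{X_H}\omega=dH$; any consistent choice works, but mixing conventions would introduce a spurious sign (which is harmless here because only the vanishing of the pairing matters).
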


Therefore, a HJ theory finds solutions on the lower dimensional manifold $Q$ and retrieves them on the higher dimensional manifold $T^{*}Q$ by the existence of a section $\gamma$ of the cotangent bundle which is the solution $\gamma$ of the Hamilton--Jacobi equation \eqref{hje}.

This picture \eqref{Xgg} can be devised in different situations, as it is the case of nonholonomic systems \cite{HJTeam2,EJLS,leones1,LOS-12,leones2,blo,OFB-11}, geometric
mechanics on Lie algebroids \cite{BMMP-10} and almost-Poisson
manifolds, singular systems \cite{LMV-12}, Nambu-Poisson framework \cite{LSNambu}, control
theory \cite{BLMMM-12}, classical field theories
\cite{LMM-09,LMMSV-12,DeLeon_Vilarino}, partial differential equations in general
\cite{Vi-11}, the geometric discretization of the Hamilton--Jacobi
equation \cite{LSdis,OBL-11}, and others \cite{BLM-12,HJteam-K}.

 One of the conditions in this theory for $\gamma$  to be a solution to be a solution of the HJ equation is that its image Im$(\gamma)$ is a Lagrangian submanifold.
Recall that given a symplectic manifold as the pair $(M,\Omega)$, a sufficient condition for a submanifold $S\subset M$ to be a Lagrangian submanifold is that $TS=TS^{\bot}$ is satisfied.
 If $S$ is a isotropic subspace of a symplectic manifold $(M,\Omega)$, then $S$ is Lagrangian if an only if $\text{dim}S=(\text{dim}M)/2$, this is equivalent to saying that $\gamma$ is closed on a manifold that is a cotangent bundle (this would not be true for an arbitrary manifold). For different types of manifolds (Poisson, Nambu--Poisson, etc), the definition of a Lagrangian submanifold has been accommodated to its background. See for example \cite{LM87}. In the case of mechanical systems, the Lagrangian submanifolds have a physical interpretation as a generalization of the set of possible initial momenta of a given point in the configuration space.  As it is well known, the image space of a closed one-form $\gamma$ is a Lagrangian submanifold, and the Weinstein tubular neighborhood theorem \cite{weinsteintubular}
 assures there exists a tubular neighborhood of a Lagrangian submanifold $S$ in $(M,\Omega)$ that is symplectomorphic to an open neighborhood of the zero section in $(T^{*}S,\Omega_S)$. 
 
Given a Hamiltonian system $(T^*Q,H,\Omega_Q)$, the image space of the Hamiltonian vector field $X_H$ is a Lagrangian submanifold of the symplectic manifold $TT^*Q$ equipped with a symplectic structure $\Omega^T_Q$ computed to be the complete lift of $\Omega_Q$. 
Accordingly, a differential system is said to be a Hamiltonian system if it can be recast as a Lagrangian submanifold of a certain symplectic manifold. It is evident that if the Lagrangian submanifold is horizontal, according to the Poincar\'{e} lemma \cite{LM87}, there locally exists a Hamiltonian function generating the dynamics. This results with an explicit differential system, called as an explicit  Hamiltonian system. Otherwise, if a Lagrangian submanifold is non-horizontal then it is not guarantied the existence of a  Hamiltonian vector field generating the dynamics. We call such kind of systems as implicit Hamiltonian system.

\bigskip 
 
\noindent\textbf{Hamilton-Jacobi theory for implicit hamiltonian systems.} 
It is important to remark that the classical HJ theory only deals with explicit Hamiltonian systems. In \cite{ELS}, we formulated an idea to deal with the implicit character. We considered implicit first order differential equations as a submanifold $S$ of $TT^{*}Q$. This submanifold $S$ projects to $TQ$ by the tangent mapping $T\pi_{Q}$ to a submanifold $S^{\gamma}=T\pi_{Q}(S)$ of $TQ$, which is another implicit differential equation on $Q$ and $\gamma:Q\rightarrow T^{*}Q$ would be the solution for the implicit Hamilton-Jacobi problem.
The philosophy of the geometric HJ theory is to
retrieve solutions of $S$, provided the solutions of $T\pi_{Q}(S)$. Let us picture this in a diagram:

\begin{center}
 \begin{tikzcd}[column sep=tiny,row sep=huge]
  & &  &S\arrow[r, hook, "i"] &TT^*Q \arrow[ld, "\tau_{T^*Q}"] \arrow[rd, "T_{\pi_Q}"] & & \\
  & &C\arrow[r, hook, "i"] &T^*Q\arrow[dr,"\pi_{Q}"] & &TQ\arrow[dl,"\tau_{Q}"]  &&S^{\gamma}\arrow[ll, hook', "i"']\\
& &C\cap \text{Im}(\gamma)\arrow[u, hook, "i"] & &Q\arrow[lu,"\gamma",bend left=20] & &\mathbb{R}\arrow[ll,"\psi"']
 \end{tikzcd}
\end{center}

In similar fashion
as in the classical Hamilton-Jacobi theorem \eqref{HJT}, $(Q,T^{*}Q)$
is a symplectic manifold with a canonical two form $\Omega_Q$ and canonical projections $\pi_Q:T^{*}Q\rightarrow Q$ and $\tau_Q:TQ\rightarrow Q$. In order to lift the solutions in $Q$ to $T^{*}Q$, we are still in need of a closed one-form $\gamma$ on $Q$, but two ingredients
of the theory are missing. One is that the base manifold $C=\tau_{T^{*}Q}(S)$ is not necessarily the whole $T^{*}Q$, but possibly a proper submanifold of it.
The second is the nonexistence of a Hamiltonian vector field due to the
implicit character of the equations. In the classical theory, the major role
of the Hamiltonian vector field is to connect the image space of $\gamma$ and the submanifold $S$. In this case, the Lagrangian submanifold has failed to be horizontal, and a Hamiltonian function does not exist even in a local chart.

Our first idea to work with a nonhorizontal submanifold is to make use of the Maslov-H\"ormander theorem (known also as generalized Poincar\'e lemma) \cite{BeTulcz, jane,LM87,weinsteintubular}, which affirms that there exists a Morse function (a family of generating functions) that generates the dynamics of the implicit system. This Morse function plays the role of the Hamiltonian in the explicit picture. The Morse family is defined on the total space of a smooth bundle linked to $TT^{*}Q$ by means of a special symplectic structure. 

The second idea to deal with the implicit character of the system is based on the local construction of a vector field. For this construction we need to consider a second auxiliary section $\sigma:C\cap Im(\gamma)\rightarrow E$, because since $E$ is implicit, there may exist several vectors in $E$ projecting to the
same point, say $c$, in $C$. The role of the section $\sigma$ is to reduce this unknown number to one. As a result, we arrive at a vector field $X_{\sigma}$ that satisfies Hamilton-like equations. We will show details in forthcoming sections applying this for implicit higher order systems \cite{ELS}.

\bigskip

\noindent \textbf{Higher order systems.} 
Higher order systems are not so common as first and second order systems, at least in the physical literature, but they still make an appearance in the mathematical description of relativistic particles with spin, string theories,
gravitation, Podolsky's electromagnetism,
in some problems of fluid mechanics and classical physics, and
in numerical models arising from the geometric discretization of first order dynamical systems
(see \cite{PRR,PRR2}
for a long but non-exhaustive list of references). In \cite{PRR} the authors propose an unified formalism for autonomous higher order dynamical systems in the Lagrangian and Hamiltonian counterparts; and in \cite{CLPRR} they propose a Hamilton--Jacobi theory for higher order systems that are explicit.

Nonetheless, these previous works had not considered the possible implicit character of the equation coming from the Lagrangian or Hamiltonian part when they give rise to higher order equations. It is important to depict a Hamilton--Jacobi theory for implicit systems, given the number of singular Lagrangians in the sense of Dirac-Bergmann \cite{AnderBerg,Dirac2}, including
systems appearing in gauge theories \cite{book:DeLeon_Rodrigues85}. The  Euler--Lagrange equations for these systems give rise to differential equations that are implicit, and they cannot be put in a normal form.
The geometric formalism for dealing with dynamical systems in their implicit form and of lower order was introduced in \cite{tulcz1,tulcz2}, 
where a unified approach for the Lagrangian description of (time-independent) constrained mechanical systems is provided through a technique that generates implicit differential equations on $T^{*}Q$ from one-forms defined on the total space of any fiber bundle over $TQ$ \cite{BaGrasMende}. Other authors designed other methods following Dirac-Bergmann prescription to be able to deal with singular Hamiltonian and Lagrangian theories, e.g., the  Gotay--Nester algorithm \cite{gotay2,gotay,gotay3,gotay0}. 

\bigskip

\noindent \textbf{Goal of the paper and the contents.}
In this paper, we construct a Hamilton--Jacobi theory for higher order mechanical systems described through implicit differential equations. More concretely, we will generalize the geometry that we have proposed in \cite{ELS}. As we have mentioned previously, in \cite{ELS}, two ideas are proposed. In the present work, we generalize the first idea by considering a higher order implicit differential equation as a submanifold $S$ of $TT^{*}T^{k-1}Q$ generated by a Morse function $F$ defined on the Whitney sum $T^{*}T^{k-1}Q\times_{T^{k-1}Q}T^kQ$ and the projected submanifold is $S^{\gamma}$ in $TT^{k-1}Q$. For a higher order Lagrangian function on $T^kQ$, whether being degenerate or nondegenerate, we will take the associated energy function as the Morse family. The idea is to find a section $\gamma:T^{k-1}Q\rightarrow T^{*}T^{k-1}Q$, in other words, it is a one-form on $T^{k-1}Q$ such that for a solution $\psi:\mathbb{R}\rightarrow T^{k-1}Q$ of $S^{\gamma}$, we have that $\gamma\circ \psi:\mathbb{R}\rightarrow T^{*}T^{k-1}Q$ is a solution of $S$. If the Lagrangian is nondegenerate, the energy function can be written in terms of the momentum coordinates available in the iterated cotangent bundle. This results with a well-defined Hamiltonian function on $T^*T^{k-1}Q$ \cite{Ostro}. In this case, the Lagrangian submanifold is horizontal with respect to $\pi_{T^{k-1}Q}$. As discussed previously, the classical Hamilton-Jacobi theory is proper for such systems. If the Lagrangian is degenerate, then the energy function still remains to be a Morse family but, in this case, a well-defined Hamiltonian function defined on the total space of the iterated cotangent bundle is not possible. In geometrical terms, the Lagrangian submanifold is nonhorizontal. This is an implicit Hamiltonian system and we are in the realm of implicit HJ theory. We also present the local construction of a vector field in order to provide an equivalent theory to \eqref{HJT} for implicit higher order systems.

On the other hand, there are two methods to write higher order Lagrangians into the form of first order Lagrangians, namely Ostrogradsky, and Schmidt approaches. Ostrogradsky approach
is based on the idea that consecutive time derivatives of initial
coordinates form new coordinates. In Schmidt's approach, the
acceleration is defined as a new coordinate instead of the velocity \cite{AnGoMaMa10,EsGu18,Sc94} and Lagrange multipliers do not make an appearance, that is the advantage of the method. Instead, the Lagrangian function is modified by adding a gauge term such that the associated energy function contains additional terms, but no Lagrange multipliers. 
Another important feature about the Schmidt method is that it equally deals with degenerate or nondegerate Lagrangians. A particular case of the method is the accelaration bundle, which arises in Lagrangians on $T^{2}Q$ \cite{Kasper}. As discussed in previous paragraphs, the energy function is a Morse family generating a Lagrangian submanifold. Here, since we add the gauge invariance of a function, we have a different Morse family, and hence, we have a different Hamilton-Jacobi problem. We find this interesting both in theoretical and practical senses.

\bigskip

\noindent \textbf{Organization of the paper.}  In Section 2, we review the fundamentals of higher order bundles and dynamics, Morse families and special symplectic structures, Tulczyjew triples for higher order bundles and implicit higher-order differential equations.
In Section 3, we construct a Hamilton--Jacobi theory for higher order implicit Lagrangian systems. We explain our two main procedures to work with the implicit character of the arising higher order implicit equations: one is the Lagrangian submanifold method or Morse family approach, and the second is the construction of a local vector field by the existence of an additional section that reduces the number of vectors in the implicit submanifold $S$ projecting to a same point of a lower dimensional bundle. For the Morse case we will ellaborate a list of subcases considering the Ostrogradsky method and the Schmidt transformation, comparing both cases and illustrating their usefulness in nondegenerate and degenerate cases.
 Section 4 shows the applications of a implicit Hamilton--Jacobi theory for higher order dynamical systems in the particular case of second order Lagrangians. We will depict such application making use of the Ostragradski approach and the Schmidt-Legendre transform. For second order Lagrangians, we also introduce the setting of the acceleration bundle for the Schmidt-Legendre transform, in order to deal likewise with degenerate or nondegenerate higher order implicit Lagrangians. Two particular examples are a deformed elastic cylindrical beam with fixed ends and the end of a javelin.
  The Ostrogradsky and Schmidt methods will be compared in this same section for nondegenerate cases, as it is the case in which Ostrogradsky applies. As more general models we will depict the second and third order Lagrangians with affine dependence on the acceleration.
Section 5 contains further commentaries on the usefulness of the theory as well as some examples.  

\section{Fundamentals}

Let us consider differential manifolds and standard tensor bundle calculus. It is assumed throughout the text that all structures and mappings are smooth ($C^{\infty}$-class). For very detailed descriptions of fundamentals, we refer to \cite{choquet} and we shall skip to our notation and brief comments on the essentials.

\subsection{Morse families and special symplectic structures} \label{MF-SSS}

\noindent \textbf{Morse families.}
Let $\left(P,\pi, N \right) $ be a fiber bundle.
The vertical bundle $VP$ over $P$ is the space of vertical vectors $U\in TP$
satisfying $T\pi\left( U\right) =0$. The conormal bundle
of $VP$ is defined by
\begin{equation*}
V^{0}P=\left\{ \alpha\in T^{\ast}P:\left\langle \alpha,U\right\rangle
=0,\forall U\in VP\right\} .
\end{equation*}
Let $E$ be a real-valued function on $P$, then the image Im$\left( dE\right)
$ of its exterior derivative is a subspace of $T^{\ast}P $. We say that $E$
is a Morse family (or an energy function) if
\begin{equation}
T_{z}Im\left( dE\right) +T_{z}V^{0}P=TT^{\ast}P,  \label{MorseReq}
\end{equation}
for all $z\in Im\left( dE\right) \cap V^{0}P$. A Morse family defined on $\left( P,\pi ,N\right) $ generates a Lagrangian submanifold of the canonical symplectic structure $\left(
T^{\ast }N,\Omega\right)$ in the following way:
\begin{equation}
S=\left\{ w \in T^{\ast }N:T^{\ast }\pi (w
)=dE\left( z\right) \right\}  \label{LagSub}
\end{equation}%
 In this case, we say that $S$ is generated by the Morse family $E$. Note that, in the definition of $%
S$, there is an intrinsic requirement that $\pi \left( z\right)
=\pi _{N}\left( w \right) $. The inverse of this statement is also true, that is, any Lagrangian submanifold is generated by a Morse family. This is known as the generalized Poincar\'{e} \cite{BeTulcz,jane,LM87,tulcz2,weinsteintubular}. Here, we are presenting the following diagram in order to summarize this discussion. 
 \begin{equation} \label{Morse-pre}
  \xymatrix{
\mathbb{R}& P \ar[d]^{\pi}\ar[l]^{E}& T^*N \ar[d]^{\pi_N}  &S \ar[l]^{i}\\ &
N  \ar@{=}[r]& N &
}
\end{equation}

\bigskip

\noindent \textbf{Local picture for Morse families.}
Assume that $N$ is equipped with local coordinates $(x^a)$, and consider the bundle local coordinates $(x^a,\lambda^\alpha)$ on the total space $P$. In this picture, a function $E$ is called a Morse family if the rank of the matrix%
\begin{equation}
\left( \frac{\partial ^{2}E}{\partial x^a x^b} \quad \frac{%
\partial ^{2}E}{\partial x^a \partial \xi^\alpha }\right)  \label{MorseCon}
\end{equation}%
is maximal. In such a case, the Lagrangian submanifold (\ref{LagSub}) generated by $E$ locally looks like
\begin{equation} \label{MFGen}
S=\left \{\left(x^a,\frac{\partial E}{\partial x^a}(x,\xi)\right )\in T^*N: \frac{\partial E}{\partial \lambda^\alpha}(x,\xi)=0 \right \}.
\end{equation}
See that the dimension of $S$ is half of the dimension of $T^*N$, and that the canonical symplectic two-form $\Omega$ vanishes on $S$.

\bigskip

\noindent \textbf{{Special symplectic structures.}} Let $P$ be a symplectic manifold carrying an exact symplectic two-form $%
\Omega=d\Theta$. Assume also that, $P$ is the total space of a fibre bundle $(P,\pi,M)$. A special symplectic structure is a quintuple $
(P,\pi,M,\Theta,\chi)$ where $%
\chi$ is a fiber preserving symplectic
diffeomorphism from $P$ to the cotangent bundle $T^{\ast}M$. Here, $\chi$ can uniquely be characterized by
\begin{equation} \label{chi-}
\left\langle \chi(p),\pi_{\ast}X(m)\right\rangle =\left\langle
\Theta(p),X(p)\right\rangle
\end{equation}
for a vector field $X$ on $P$, for any point $p$ in $P$ where $\pi(p)=m$. Note that, pairing on the left hand side of (\ref{chi-}) is the natural pairing between the cotangent space $T^*_mM$ and the tangent space $T_mM$. Pairing on the right hand side of (\ref{chi-}) is the one between the cotangent space $T^*_pP$ and the tangent space $T_pP$. We refer \cite{Benenti,LaSnTu75, SnTu73} for further discussions on special symplectic structures. Here is a diagram exhibiting the special symplectic structure.
\begin{equation} \label{sss}
\xymatrix{T^{\ast }M \ar[dr]_{\pi_{M}} &&P
\ar[dl]^{\pi} \ar[ll]_{\chi}
\\&M }
\end{equation}
The two-tuple $(P,\Omega)$ is called as underlying symplectic manifold of the special symplectic structure $%
(P,\pi,M,\Theta,\chi)$.

Let $
(P,\pi,M,\Theta,\chi)$ be a special symplectic structure. Assume also that $S_P$ be a Lagrangian submanifold of $P$. The image $\chi(S_P)$ of $S_P$ is a Lagrangian submanifold of $T^{\ast}M$. By referring to the generalized Poincar\'{e} lemma presented in the previous subsection, we argue that the Lagrangian submanifold $\chi(S_P)$ can locally be generated by a Morse family $E$ on a fiber bundle $(R,\tau,N)$ where $N\subset M$. Accordingly, we are calling the Morse family $E$ as a generator of both $S$ and $S_P$ since they are the same up to $\chi$. The following diagram summarizes this discussion by equipping a Morse family to the special symplectic structure (\ref{sss}).
 \begin{equation} \label{Morse-Gen}
 \xymatrix{
\mathbb{R}& R \ar[d]^{\tau}\ar[l]^{E}& T^*M \ar[dr]_{\pi_M}& &P \ar[ll]_{\chi} \ar[dl]^{\pi}\\ &
N \ar@{(->}[rr]&& M
}
\end{equation}

\subsection{Geometry of higher order bundles}

Given a fibration $(P,\pi,N)$, consider the dimension of $P$ be $p$ and that of $N$ be $n$. Consider a section $s:N\rightarrow P$ and let us denote by Sec$(P)$ the set of all sections on $P$. We say that two sections $s,s'\in$ Sec$(P)$ are $k$-related for $0\leq k\leq \infty$ in a point $x\in N$ if $s(x)=s'(x)$ and for all functions $f:P\rightarrow \mathbb{R}$, the function $f\circ s-f\circ s':N\rightarrow \mathbb{R}$ is flat of order $k$ at $x$, that is, this function and all the derivatives up to order $k$ included are zero at $x$. The equivalence class determined by the $k$-relation is called jet of order $k$ for a section $j^ks(x)$ \cite{book:Saunders89}. The set of all $k$-jets at $x$ is denoted by $J_x^k(P,\pi,N)$. For the union of all of them at any point $x$, we say $J^k(P,\pi,N)$. 
More generally, we can define now at a point a mapping from $N$ to $P$. Consider a map $f:N\rightarrow P$, then the equivalence class determined by the $k$-equivalence is called the $k$-jet of $f$ at $x$. For a representative of the class we use $j^kf(x)$ and the set of $k$-jets is represented by $J_x^k(P,N)$ and again, the union of these at every $x$ will be represented by $J^k(P,N)$. Notice that the manifold $J^k(P,\pi,N)$ is a submanifold of $J^k(P,N)$ and so the above projections admit restrictions to it.

{\it Remark:} Both in the case of sections or mappings it is possible to define jets for local sections or mappings. For it, one works with the germs, recall it is the equivalence class determind by the relation: two section/mappings are related if they have the same value at every point in the intersection of their domains.

The $k$-jet manifold of section/mappings can be fibered in different ways, we have
\begin{align}
&\alpha^k:J^k(P,N)\rightarrow N; \quad \alpha^k(f^k(x))=x,\\
&\beta^k:J^k(P,N)\rightarrow P \quad \alpha^k(f^k(x))=x\\
&\rho^k_r:J^k(P,N)\rightarrow J^r(P,N)\quad \rho_r^k(f^k(x))=f^r(x), \quad r\leq k;  
\end{align}
Here the $\alpha^k$ projection is called the source projection and $\beta^k$ is the target projection.

Now, consider $J^1(J^k(P,N),\alpha^k,N)$ be a manifold of $1$-jets of $\alpha^k:J^k(P,N)\rightarrow N$. The interest of this manifold is that the fibered manifold $J^{k+1}(P,N)$ can be regularly immersed into $J^1(J^k(P,N),\alpha^k,N)$.  Let us simplify the notation by $J^k(P,N)\equiv J^kP$ and so on.

\begin{center}\label{embed}
 \begin{tikzcd}[column sep=huge,row sep=huge]
  &J^{k}P &J^{k+1}P \arrow[l, swap, "\rho^{k+1}_{k}"]\arrow[r, "\psi"]  &J^1(J^{k}P)  \\
 & &N\arrow[lu, "\rho^{k+1}_{k} \circ u"]\arrow[u, "u"]\arrow[ru, swap,"j^1(\rho^{k+1}_{k} \circ u)"]&\\
 \end{tikzcd} 
\end{center}
such that
\begin{equation}
u(x)=J^1 (\rho^{k+1}_{k} \circ u)(x)
\end{equation}
for a function $u:N\rightarrow J^{k+1} P$. Note that $J^{k+1}P\subset J^1(J^k P)$.
We can set local coordinates for jets, the jet manifold $J^k(P,N)$ has an atlas when it is modeled in the space $\mathbb{R}^p\times \mathbb{R}^n\times J^k(\mathbb{R}^p,\mathbb{R}^n)$, locally we may think $J^k(P,N)$ as $J^k(\mathbb{R}^p, \mathbb{R}^n)$. Locally, $(x_a,\xi^A, \xi^A_{a(r)})$ with $1\leq r\leq k$ is the coordinate representation of a point of $J^k(P,N)$.

Now, a particular type of jet manifold is the tangent bundle of higher order. Consider $Q$ a configuration space of dimension $n$, $TQ$ is the tangent bundle and $T^{*}Q$ is the cotangent bundle or phase space for a dynamical system. The $k$-order tangent bundle can be identified with $k$ order jets in the following way

\begin{equation}
T^kQ=J_0^k(\mathbb{R},Q)=J^k(\mathbb{R}\times Q,\pi_1,  \mathbb{R})
\end{equation}
($T^kQ$ is a submanifold of $J^k(\mathbb{R},Q)$). One has the same type of fibrations as for the jets above. In fact, if $r\leq k$, we have the canonical projection $\rho_r^k:T^kQ\rightarrow T^rQ$, given by $\rho_r^k(\sigma^k(0))=\sigma^r(0)$, and the target projection is $\beta^k:T^kQ\rightarrow Q$, given by $\beta^k(\sigma^k(0))= \sigma(0)$. One has obviously $\rho^k_0=\beta^k$, where $T^0Q$ is identified canonically with $Q$.

To describe the local coordinates in $T^kQ$, let $\left(U,\varphi\right)$ be a local chart in $Q$,
with $\varphi = (\varphi^A)$, $1\leqslant A \leqslant n$,
and $\varphi \colon \R \to Q$ is a curve in $Q$ such that $\varphi(0) \in U$;
by writing $\phi^A = \varphi^A \circ \phi$, the $k$-jet $\tilde{\phi}^k(0)$ is uniquely represented in
 $\left(\beta^k\right)^{-1}(U) = T^kU$ by

\begin{equation}  \label{coorT2}
( {q}^A _{(0)}, {q}^A _{(1)}, {q}^A _{(2)},..., {q}^A _{(k)}):=( {q}^A ,{\dot q}^A, \ddot{q}^A,..., {q} ^{(k)A})
\end{equation}
where
$$
q^A = \phi^A(0) \quad ; \quad
q_{(i)}^A = \restric{\frac{d^{(i)}\phi^A}{dt^{(i)}}}{t=0} \, \quad i=1,\dots,k
$$
in the open set $\left(\beta^k\right)^{-1}(U) \subseteq \Tan^kQ$.
The local expression of the canonical projections $\beta^k$ and $\rho^k_r$ are
$$
\rho^k_r\left(q_{(0)}^A,q_{(1)}^A,\ldots,q_{(k)}^A\right) = \left(q_{(0)}^A,q_{(1)}^A,\ldots,q_{(r)}^A\right), \qquad
\beta^k\left(q_{(0)}^A,q_{(1)}^A,\ldots,q_{(k)}^A\right) = \left(q_{(0)}^A\right) \, .
$$
Hence, local coordinates in the open set $\left(\beta^k\right)^{-1}(U) \subseteq \Tan^kQ$
adapted to the $\rho^k_r$-bundle structure are $$\left(q_{(0)}^A,\ldots,q_{(r)}^A,q_{(r+1)}^A,\ldots,q_{(k)}^A\right),$$
and a section $s \in \Gamma(\rho^k_r)$ is locally given in this open set by
$$s(q_{(0)}^A,\ldots,q_{(r)}^A) = \left( q_{(0)}^A,\ldots,q_{(r)}^A,s_{(r+1)}^A,\ldots,s_{(k)}^A \right),$$
where $s_{(j)}^A$ (with $r+1 \leqslant j \leqslant k$) are local functions.
This approach is very useful to work on the tangent bundle $TT^{k-1}Q$. Accordingly, we denote the induced coordinates on $TT^{k-1}Q$ as $$( {q} _{(\kappa)}^A;{\dot{q}}_{(\kappa)}^A)=( {q}^A _{(0)}, {q}^A _{(1)},..., {q}^A _{(k)}
 ;
 {\dot q}_{(0)}^A,{\dot q}_{(1)}^A,...,{\dot q}_{(k)}^A)\in TT^{k-1}Q ,$$
 where $\kappa$ runs from $0$ to $k-1$. 

\subsection{Tulczyjew triples for higher order bundles}

Consider the natural embedding of $T^kQ$ into the iterated tangent bundle $TT^{k-1}Q$ of $T^{k-1}Q$. This is locally given by
\begin{equation} \label{emb}
\iota:T^kQ\mapsto TT^{k-1}Q: (q_{(0)},{q}_{(1)},\dots, q_{(k)}) \mapsto (q_{(0)},{q}_{(1)},\dots q_{(k-1)}; {q}_{(1)},\dots q_{(k-1)}, q_{(k)}),
\end{equation}
see \cite{book:DeLeon_Rodrigues85}. Here, the induced coordinates on $TT^{k-1}Q$ are assumed to be 
\begin{equation}\label{indcoor}
( {q} _{(\kappa)}^A;{\dot{q}}_{(\kappa)}^A)=( {q}^A _{(0)}, {q}^A _{(1)},..., {q}^A _{({k-1})}
 ;
 {\dot q}_{(0)}^A,{\dot q}_{(1)}^A,...,{\dot q}_{({k-1})}^A)\in TT^{k-1}Q ,
 \end{equation}
 where $\kappa$ runs from $0$ to ${k-1}$. For future reference, let us record here the particular case $s=2$ that is
\begin{equation} \label{sot}
T^2Q\longrightarrow TTQ: ( q_{(0)}^A,q_{(1)}^A,q_{(2)}^A) \longrightarrow
(  q_{(0)}^A,q_{(1)}^A;q_{(1)}^A,q_{(2)}^A).
\end{equation}
This embedding will enable us to study the dynamics on the higher order bundles in the framework of Tulczyjew triples. 

Recall first the first order Tulczyjew triple \cite{GrabGrab,Ibort,MMT,tulcz1, tulcz2,tulcz3,ZG}. By replacing the configuration manifold $Q$ in the clasical first order Tulczyjew triple by the $(k-1)$-th order tangent bundle $T^{k-1}Q$,
we draw the following generalized Tulczyjew's triple to higher order dynamics \cite{EsGu18}
\begin{equation}
\xymatrix{T^{\ast }TT^{k-1}Q \ar[dr]_{\pi_{TT^{k-1}Q}}&&TT^{\ast }T^{k-1}Q\ar[dl]^{T\pi_{T^{k-1}Q}}
\ar[rr]^{\Omega_{T^{k-1}Q}^{\flat}} \ar[dr]_{\tau_{T^{\ast }T^{k-1}Q}}
\ar[ll]_{\Xi_{T^{k-1}Q}}&&T^{\ast }T^{\ast }T^{k-1}Q\ar[dl]^{\quad \pi _{T^{\ast}T^{k-1}Q}}
\\&TT^{k-1}Q&&T^{\ast}T^{k-1}Q }. \label{TT}
\end{equation}

\noindent
that shows the passing from the tangent to a higher order bundle to its cotangent.
Here, $\pi_{TT^{k-1}Q}$ is the cotangent bundle projection, $T\pi_{T^{k-1}Q}$ is the tangent lift of $\pi_{T^{k-1}Q}$, $\tau_{T^{\ast }T^{k-1}Q}$ is the tangent bundle projection, and $\pi _{T^{\ast}T^{k-1}Q}$ is the cotangent bundle projection. 

Since $T^*T^{k-1}Q$ is a cotangent bundle, the pair $(T^*T^{k-1}Q, \Omega_{T^{k-1}Q})$ is a symplectic manifold with the canonical symplectic two-form $\Omega_{T^{k-1}Q}=d\Theta_{T^{k-1}Q}$.
On $T^*T^{k-1}Q$, the Darboux coordinates are 
$$( {q} _{(\kappa)}^A;{p}^{(\kappa)}_A)=( {q}^A _{(0)}, {q}^A _{(1)},...,{q}^A _{(k-1)},{p}^{(0)}_A,{p}^{(1)}_A,
...,{p}^{(k-1)}_A)\in T^*T^{k-1}Q,$$ so we write the canonical two-form as
\begin{equation}
\Omega_{T^{k-1}Q}= \sum_{\kappa=0}^{k-1}  d{p}^{(\kappa)}_A \wedge d {q}_{(\kappa)}^A.
\end{equation}
On $TT^*T^{k-1}Q$, introduce the following local coordinate system
\begin{equation} \label{CoordTT*TQ}
( {q} _{(\kappa)}^A,{p}^{(\kappa)}_A,
{\dot{q}}_{(\kappa)}^A,{\dot{p}}^{(\kappa)}_A)\in TT^*T^{k-1}Q
\end{equation}
where $\kappa$ runs from $0$ to ${k-1}$. The pair $\left(TT^*T^{k-1}Q, \Omega^T_{T^{k-1}Q}\right)$ is a symplectic manifold with lifted symplectic two-form. In terms of the coordinates, $\Omega^T_{T^{k-1}Q}$ can be written as
\begin{equation}\label{omegat}
\Omega^T_{T^{k-1}Q}= \sum_{\kappa=0}^{k-1} d{\dot p}^{(\kappa)}_A\wedge d  {q} _{(\kappa)}^A
+
\sum_{\kappa=0}^{k-1} d{p}^{(\kappa)}_A\wedge d{\dot q}_{(\kappa)}^A.
\end{equation}
Then, we define the adapted symplectic diffeomorphism $\Xi_{T^{k-1}Q}$ and $\Omega^\flat_{T^{k-1}Q}$ from the symplectic diffeomorphism $\Xi_Q$ and $\Omega^\flat_{Q}$ in the first order Tulczyjew triple \eqref{TT}. Accordingly, they are computed as
\begin{align}\label{morph}
 \Xi_{T^{k-1}Q}( q_{(\kappa)},p^{(\kappa)},
{\dot{q}}_{(\kappa)},{\dot{p}}^{(\kappa)})
&=
( q_{(\kappa)},
{\dot{q}}_{(\kappa)},{\dot{p}}^{(\kappa)},{p}^{(\kappa)}),
\\
\Omega^\flat_{T^{k-1}Q}( q_{(\kappa)},{p}^{(\kappa)},
{\dot{q}}_{(\kappa)},{\dot{p}}^{(\kappa)}) \nonumber
&=(q_{(\kappa)},{p}^{(\kappa)},{\dot{p}}^{(\kappa)},-
\dot{q}_{(\kappa)}), \quad \kappa=0,\dots,k-1.
\end{align}
We remark here that both the left and the right wings of the higher order Tulczyjew's triple are special symplectic structures, and the triple is merging them to enable a Legendre transformation for the singular or/and constrained higher order dynamical systems. 

\subsection{Explicit higher order differential equations}

Consider the bundle projection $\pi_1 \colon \R \times Q \to \R$ onto the first factor. If $\phi \colon \R \to Q$ is a curve in $Q$,
the {\rm canonical lifting} of $\phi$ to $\Tan^kQ$
is the curve $j^k\phi\colon \Real\to\Tan^kQ$.
We consider the module of vector fields $\vf(\pi^r_s)$ along the projection $\pi^r_s:J^r\pi\rightarrow J^s\pi$.
The $k$th holonomic lift of  $\ds X = X_o \derpar{}{t}\in \vf(\R)$  is given by 
$$
j^kX = X_o\left( \derpar{}{t} + \sum_{i=0}^{k} q_{(i+1)}^A\derpar{}{q_{(i)}^A}\right) \, .
$$

Using the identification $J^k\pi \cong \R \times \Tan^kQ$ and denoting by
$\pi_2 \colon \R \times Q \to Q$ the natural projection onto the second factor, and
all the induced projections in higher order jet bundles,
we have the following diagram.
$$
\xymatrix{
\R \times \Tan^{k+1}Q \ar[dd]_{\pi^{k+1}_{k}} \ar[rr]^-{\pi_2^{k+1}} & \ & \Tan^{k+1}Q \ar[dd]^{\rho^{k+1}_{k}} \\
\ & \ & \ \\
\R \times \Tan^{k}Q \ar[rr]^-{\pi_2^k} & \ & \Tan^{k}Q
}
$$


\begin{definition}
A curve $\psi\colon \R \to \Tan^{k}Q$ is {\rm holonomic of type $r$}, $1 \leqslant r \leqslant k$,
if $j^{k-r+1}\phi = \rho^{k}_{k-r+1} \circ \psi$,
where $\phi = \beta^{k} \circ \psi, \phi \colon \R \to Q$;
that is, the curve $\psi$ is the lifting of a curve in $Q$ up to $\Tan^{k-r+1}Q$.
$$
\xymatrix{
\ & \ & \Tan^kQ \ar[d]^{\rho^k_{k-r+1}}  \\
\R \ar[urr]^{\psi} \ar[ddrr]_{\phi = \beta^k\circ\psi}
\ar[rr]^{\rho^k_{k-r+1}\circ \psi} \ar[drr]^{j^{k-r+1}\phi}
& \ & \Tan^{k-r+1}Q \ar[d]^{{\rm Id}} \\
\ & \ & \Tan^{k-r+1}Q \ar[d]^{\beta^{k-r+1}} \\
\ & \ & Q
}
$$
In particular, a curve $\psi$ is {\rm holonomic of type 1}
if $j^{k}\phi = j^k_0\phi$, with $\phi = \beta^{k} \circ \psi$.
Throughout this paper, holonomic curves of type $1$ are simply called {\rm holonomic}.
\end{definition}

\begin{definition}
A vector field $X \in \vf(\Tan^kQ)$ is a {semispray of type $r$}, $1 \leqslant r \leqslant k$,
if every integral curve $\psi$ of $X$ is holonomic of type $r$.
\end{definition}

The local expression of a semispray of type $r$ is
\begin{equation}\label{semispray}
X = q_{(1)}^A\derpar{}{q_{(0)}^A} + q_{(2)}^A\derpar{}{q_{(1)}^A} + \ldots + q_{(k-r+1)}^A\derpar{}{q_{(k-r)}^A} +
F_{k-r+1}^A\derpar{}{q_{(k-r+1)}^A} + \ldots + F_k^A\derpar{}{q_{(k)}^A} \, .
\end{equation}

where $F_{k-r+1}^A,\dots,F_k^A$ are functions of $q_{(i)}, i=1,\dots,k$.

Observe that semisprays of type $1$ in
$\Tan^kQ$ are the analogue to holonomic vector fields in first order mechanics.
Their local expressions are
\begin{equation}\label{semispray1}
X = q_{(1)}^A\derpar{}{q_{(0)}^A} + q_{(2)}^A\derpar{}{q_{(1)}^A} + \ldots + q_{(k)}^A\derpar{}{q_{(k-1)}^A} + F^A\derpar{}{q_{(k)}^A} \, .
\end{equation}

If $X\in\vf(\Tan^kQ)$ is a semispray of type $r$, a
curve $\phi \colon \R \to Q$ is said to be a {\sl path} or {\sl
solution} of $X$ if $j^k\phi$ is an integral curve of $X$; that is,
$\widetilde{j^k\phi} = X \circ j^k\phi$, where $\widetilde{j^k\phi}$
denotes the canonical lifting of $j^k\phi$ from $\Tan^kQ$ to
$\Tan \Tan^kQ$.

\subsection{Implicit higher order differential equations}

Consider a $k$-th order system 
$$\Phi^l(q,\dot{q},\ddot{q},\dots, q^{(k)})=0, \qquad l=1,\dots, r$$
of differential equations defined by $r$ number equations on $T^kQ$. Geometrically, the functions $\Phi^l$ define a submanifold $S$ of $T^kQ$. Using the induced coordinates on the higher order tangent bundle, this submanifold is given locally by
\begin{equation}
S=\{\mathbf{q}:=(q,\dot{q},\ddot{q},\dots, q^{(k)})| \quad \Phi^l(\mathbf{q})=0 \}.
\end{equation}
A differentiable curve $\phi$ on $Q$ whose canonical $k$-lifting is a curve $\psi=j^k_0\phi$ on $T^kQ$ is a solution of $S\subset T^kQ$ if the lifted curve lies in $S$. 

The submanifold $S$ can be understood as a first order differential equation defined on $TT^{k-1}Q$ as well. To this end we first consider the natural embedding of $T^kQ$ into the iterated tangent bundle $TT^{k-1}Q$ of $T^{k-1}Q$. This is locally described as in \eqref{emb}. For $s=2$, recall \eqref{sot}.
Using the mapping in (\ref{emb}), image $\iota(S)$ of $S$ is a submanifold of $TT^{k-1}Q$.   
The differential equation $S$ is called explicit if there exists a vector field $X$ on $T^{k-1}Q$ such that $Im(X)$ is $\iota(S)$. Otherwise, $S$ is called an implicit differential equation.

Looking for a Lagrangian function generating a differential equation is the inverse problem of calculus of variations. See, for example, \cite{MoFeVeMaRu90} for a geometric approach to this problem for the case of $s=1$. For the fourth order explicit systems, in \cite{Fe96}, some conditions are proposed for the existence and uniqueness of a Lagrangian function. In this work, we assume that there exists already a Lagrangian function generating the dynamics.

\section{The Hamilton-Jacobi problem for higher order implicit systems}

The submanifold $S$ defined by a higher-order regular Lagrangian in $TT^{*}T^{k-1}Q$ projects via $T\pi_{{T^{k-1}Q}}$ on the whole $T^{*}T^{k-1}Q$. In the singular case, this projection is only a part of $T^{*}T^{k-1}Q$. If we would like to construct a Hamilton--Jacobi theory in this setting, there must be a way in which we obtain a Lagrangian submanifold of $TT^{*}T^{k-1}Q$. To find a solution, we need to find a section $\gamma:T^{*}T^{k-1}Q\rightarrow TT^{*}T^{k-1}Q$. Nonetheless, starting from an implicit differential equation on $TT^{*}T^{k-1}Q$, by the projection $T\pi_{T^{k-1}Q}$, we arrive at a submanifold in $TT^{k-1}Q$. Hence, we need to make use of Tulcyjew's triple (\ref{TT}) to pass from the Lagrangian to Hamiltonian pictures $TT^{k-1}Q$ and $T^{*}T^{k-1}Q$ through some morphisms. In this section we develop a geometric Hamilton-Jacobi theory for higher order implicit differential equations using two different approaches.
 
The first method consists of a theory which does refer to vector fields, that we will refer to as the Morse family method. The second is based on the construction of a local vector field defined on the image of a section, but not defined globally on the phase space. In this case, the definitions above apply locally, and the philosophy of the Hamilton--Jacobi approach can match the explanation right abovementioned. Let us then start first with the method which is not so related to the usual definitions and that implements as a novelty the use of Lagrangian submanifolds generated by a Morse function. Hence, we start with our so-called Morse family method.

Notice also that we will rely on the Ostrogradsky approach \eqref{CoordTT*TQ} in this subsection, but there is an alternative, the Schmidt approach that we will introduce in the next section.

 \subsection{The Morse family method - General approach}
 
Let us start with the first method. We start with a Lagrangian submanifold $S$ of the symplectic manifold $TT^{*}T^{k-1}Q$ equipped with the symplectic two-form $\Omega^T_{T^{k-1}Q}$ exhibited in (\ref{omegat}). If it is a horizontal Lagrangian submanifold then it is possible to find a Hamiltonian vector field on $T^{*}T^{k-1}Q$ whose image is exactly the submanifold itself \cite{SnTu73}. This is corresponding to an explicit dynamical system. If the Lagrangian submanifold fails to be horizontal then there is no Hamiltonian vector field generating the Lagrangian submanifold. In this case, dynamical equations governing the dynamical system can only be written in an implicit differential equation form. We now propose a Hamilton-Jacobi formalism valid both for the explicit and implicit systems. 

Consider a Lagrangian submanifold $S$ of $TT^{*}T^{k-1}Q$. If it is projectable, by projecting $S$ via the mapping $\tau_{T^{*}T^{k-1}Q} $, we reach a submanifold of $T^*T^{k-1} Q$. On the other hand, if we project $S$ with $T{\pi_{T^{k-1}Q}}$, we reach a submanifold of $TT^{k-1}Q$, where we have a first order implicit differential equation. Accordingly, we can iteratively project these resulting bundles: 
from $T^{*}T^{k-1}Q$ to $Q$. Let us summarize this in the following diagram.
\begin{center}
 \begin{tikzcd}[column sep=tiny,row sep=huge]
  & &  &S\arrow[r, hook, "i"] &TT^*T^{k-1}Q \arrow[ld, "\tau_{T^*T^{k-1}Q}"] \arrow[rd, "T{\pi_{T^{k-1}Q}}"] & & \\
  & & &T^*T^{k-1}Q\arrow[dr,"\pi_{T^{k-1}Q}"]& &TT^{k-1}Q\arrow[dl,"\tau_{T^{k-1}Q}"]\arrow[dr,"T\tau^{k-1}_Q"]  &&S^{\gamma}\arrow[ll, hook', "i"']\\
& && &T^{k-1}Q\arrow[d,"\tau_Q^{k-1}"]\arrow[ul,"\gamma",bend left=20]
 & &TQ\arrow[dll,"\tau_Q"'] &\\
 & & & &Q & &\mathbb{R}\arrow[ll,"\phi"']\arrow[u,"j^1\phi"']  &\\
 \end{tikzcd}
\end{center}
where $\tau_Q^{k-1}:T^{k-1}Q\rightarrow Q$.

Notice that if $S$ is integrable, then 
$T{\pi_{T^{k-1}Q}}(S)$ is integrable too. We see this by considering the projection $T{\pi_{T^{k-1}Q}}(V)$ of an element $V\in S$. Note that, if $\varphi$ is a curve lying in  $T^*T^{k-1}Q$ and it is tangent to $V\in S$, then $ \pi_{T^{k-1}Q}\circ \varphi$ is curve on $T^{k-1}Q$ that is tangent to $T{\pi_{T^{k-1}Q}}(V)$. This shows that the projections of the solutions of $S$ are solutions of $T{\pi_{T^{k-1}Q}}(S)$.
The inverse question is precisely the basis of a Hamilton--Jacobi theory, i.e., if starting from the solutions of $T{\pi_{T^{k-1}Q}}(S)$
we are able to construct solutions of $S$, that is to lift the solutions on $TT^{k-1}Q$ to the iterated bundle $TT^{*}T^{k-1}Q$.

Notice that $S$ may not be projectable, that means that $S$ is only projectable when it is restricted to the image space of a differential one-form $\gamma$ on $T^{k-1}Q$. We denote the restriction of $S$ to the image space of a one-form $\gamma$ as follows $S|_{Im(\gamma)}$.  
\noindent
For this procedure, we need to introduce a section $\gamma:T^{k-1}Q\rightarrow T^*T^{k-1}Q$ such that for a solution $\psi:\mathbb{R}\rightarrow T^{k-1}Q$ of $S^{\gamma}=T\pi_{T^{k-1}Q} (S)$, we have that $\gamma\circ\psi:\mathbb{R}\rightarrow T^*T^{k-1}Q$ is a solution of $S$. We say that $S$ and $S^\gamma$ are $\gamma-$related. In accordance to the usual Hamilton--Jacobi theory \cite{BMMP-10,HJteam-K,HJTeam2,leones2}, recall \eqref{HJT}, we have
\begin{equation}\label{Xg}
  \xymatrix{ T^{*}T^{k-1}Q
\ar[dd]^{\pi_{T^{k-1}Q}} \ar[rrr]^{\widetilde{j^{k-1}}\circ \gamma\circ \phi}&   & &TT^{*}T^{k-1}Q\ar[dd]^{T\pi_{T^{k-1}Q}} &S\ar@{^{(}->}[l]\\
  &  & & &\\
T^{k-1}Q\ar@/^2pc/[uu]^{\gamma}\ar[rrr]^{\widetilde{j^{k-1}}\circ \phi} &  & & TT^{k-1}Q &S^{\gamma}\ar@{^{(}->}[l]}
\end{equation}
In this case, since $S$ and $S^{\gamma}$ are implicit, we do not have a vector field.
Nonetheless, as we summarized in Section \eqref{MF-SSS}, for every Lagrangian submanifold $S$ in $TT^*T^{k-1}Q$, there exists a Morse family $E$ defined over a smooth bundle structure 
$(R,\tau,T^*T^{k-1}Q)$ generating $S$. Let us recall in a diagram the Lagrangian submanifold that is generated and the Lagrangian submanifold we need for a Hamilton--Jacobi theory:
\begin{equation}
\xymatrix{ 
&&D\ar@{^{(}->}[d]&&S\ar@{^{(}->}[d]
\\
\mathbb{R}& R \ar[d]^{\tau}\ar[l]_{E}& T^*T^{*}T^{k-1}Q\ar[d]_{\pi_{T^{*}T^{k-1}Q}}& &TT^{*}T^{k-1}Q \ar[ll]_{\Omega_{T^{k-1}}^{\flat}} \ar[dll]^{\quad \tau_{{T^{*}T^{k-1}Q}}}\\ &
T^{*}T^{k-1}Q \ar@{=}[r]& T^{*}T^{k-1}Q
} 
\end{equation}
where the triangle is the special symplectic structure presented as the right wing of the Tulczyjew triple (\ref{TT}). Here, $D$ is the image of $S$ under the musical mapping $\Omega_{T^{k-1}Q}^{\flat}$ hence a Lagrangian submanifold of $T^*T^{*}T^{k-1}Q$. Assume the local coordinates $(q^A_{(\kappa)},p_A^{(\kappa)},\lambda^\alpha)$ on the fiber bundle $R$. Here $(q^A_{(\kappa)},p_A^{(\kappa)})$ is the Darboux' coordinates on $T^{*}T^{k-1}Q$ since $\kappa$ runs from $0$ to $k-1$. The Lagrangian submanifold $S$, generated by the Morse family $E=E(q_{(\kappa)},p^{(\kappa)},\lambda)$, can be written as 
\begin{equation} \label{EandF}
D=\left \{\left(q^A_{(\kappa)},p_A^{(\kappa)};\frac{\partial E}{\partial q^A_{(\kappa)}},\frac{\partial E}{\partial p_A^{(\kappa)}}\right )\in T^{*}T^{*}T^{k-1}Q:\frac{\partial E}{\partial \lambda^\alpha}=0\right\}.
\end{equation}
The isomorphic image of $D$ is the Lagrangian submanifold describing the dynamics and computed to be
\begin{equation} \label{EandF2}
S=\left \{\left(q^A_{(\kappa)},p_A^{(\kappa)};\frac{\partial E}{\partial p_A^{(\kappa)}},-\frac{\partial E}{\partial q^A_{(\kappa)}}\right )\in TT^*T^{k-1}Q:\frac{\partial E}{\partial \lambda^\alpha}=0\right\}.
\end{equation}
The Lagrangian submanifold $S$ generates the following systems of implicit differential equations
 \begin{equation}
\dot{q}^A_{(\kappa)}=\frac{\partial E}{\partial p_A^{(\kappa)}}, \qquad \dot{p}_A^{(\kappa)}=-\frac{\partial E}{\partial q^A_{(\kappa)}}, \qquad \frac{\partial E}{\partial \lambda^\alpha}=0.
\end{equation}

We introduce a closed one-form $\gamma$ on $T^{k-1}Q$ with local picture $$\gamma(q_{(\kappa)})=\gamma^{(\kappa)}_Adq_{(\kappa)}^A,$$
where $\gamma^{(\kappa)}_A$ are real valued functions on $T^{k-1}Q$. 
See that, Im$(\gamma)$ is a Lagrangian submanifold of $ T^*T^{k-1}Q$, so that there is an inclusion $\iota:Im(\gamma)\mapsto T^*T^{k-1}Q$. We use the inclusion to pull the bundle $(R,\tau,T^*T^{k-1}Q)$ back over Im$(\gamma)$. By this, one arrives at a fiber bundle $(\iota^*R,\iota^*\tau,Im(\gamma))$. 
\begin{equation}\label{pbb-F}
  \xymatrix{
\iota^*(R) \ar[rr] ^{\varepsilon} \ar[dd]^{\iota^*\tau}&& R \ar[dd]^{\tau}\\ \\
Im(\gamma) \ar [rr]_{\iota}&& T^*T^{k-1}Q
}
\end{equation}
Here, the total space the pull-back bundle is
$$\iota^*(R)=\left \{(\gamma(q_{(\kappa)}),z)\in Im(\gamma)\times R : \tau(z)\in Im(\gamma)\right \}$$
with $\varepsilon$ is the corresponding inclusion. Although restriction of the Morse family on $\iota^*(R)$ should formally be written as $E\circ \epsilon$, we will abuse notation using $E$. The submanifold generated by $E=E(q_{(\kappa)},\gamma^{(\kappa)},\lambda)$ is given by
\begin{equation} \label{EandF-gamma}
S\vert_{Im({\gamma})}=\left \{\left(q^A_{(\kappa)},\gamma_A^{(\kappa)};\frac{\partial E}{\partial q_A^{(\kappa)}},-\frac{\partial E}{\partial q^A_{(\kappa)}}\right )\in TT^*T^{k-1}Q:\frac{\partial E}{\partial \lambda^\alpha}=0\right\}.
\end{equation}
 Note that, if the Lagrangian submanifold $S$ was explicit, and would be understood as the image of a Hamiltonian vector field $X_H$, then $S\vert_{Im({\gamma})}$ reduces to the image space of the composition $X_H \circ \gamma$.

The submanifold $S\vert_{Im({\gamma})}$ exhibited in (\ref{EandF-gamma}) does not depend on the momentum variables. This enables us to project it to a submanifold $S^{\gamma}$ of $TQ$ by the tangent mapping $T\pi_Q$ as follows
\begin{equation} \label{E-gamma}
S^{\gamma}=T\pi_{T^{k-1}Q}\circ S\vert_{Im({\gamma})}=\left \{\left(q^A_{(\kappa)},\frac{\partial E}{\partial q_A^{(\kappa)}}(q_{(\kappa)},\gamma^{(\kappa)},\lambda)\right )\in TT^{k-1}Q:\frac{\partial E}{\partial \lambda^\alpha }=0\right\}.
\end{equation}
Note that the submanifold $S^{\gamma}$ defines an implicit differential equation on $T^{k-1}Q$. We state the generalization of the Hamilton-Jacobi theorem (\ref{HJT}) as follows. 
\begin{theorem}[Higher order implicit HJ theorem] \label{nHJT}
The following conditions are equivalent for a closed one-form $\gamma$ that is a solution of the implicit higher order Hamilton--Jacobi problem: 
\begin{enumerate}
\item The Lagrangian submanifold $S$ in (\ref{EandF}) and the submanifold $S^{\gamma}$ in (\ref{E-gamma}) are $\gamma$-related, that is
 $$T\gamma(S^{\gamma})=S\vert_{Im({\gamma})}$$

\item The Morse family $E$ that generates the submanifold $S$ fulfills the equation 

\begin{equation}\label{IHHJE}
dE(q_{(\kappa)},\gamma^{(\kappa)},\lambda)=0,
\end{equation}
\end{enumerate}
\end{theorem}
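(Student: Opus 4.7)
The plan is to reduce the equivalence to a short coordinate computation in the Darboux and induced charts already used in equations (\ref{EandF})--(\ref{E-gamma}), in which condition (1) becomes an equality between two parametrically described subsets of $TT^*T^{k-1}Q$, and condition (2) becomes the vanishing of coefficients of a one-form on the pull-back bundle $\iota^*R$.

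First, I would compute the tangent lift of the one-form $\gamma(q)=\gamma^{(\kappa)}_A(q)\,dq^A_{(\kappa)}$. In the induced coordinates on $TT^*T^{k-1}Q$,
\[
T\gamma\bigl(q^A_{(\kappa)};\dot q^A_{(\kappa)}\bigr)=\left(q^A_{(\kappa)},\,\gamma^{(\kappa)}_A(q);\;\dot q^A_{(\kappa)},\;\frac{\partial\gamma^{(\kappa)}_A}{\partial q^B_{(\mu)}}\dot q^B_{(\mu)}\right).
\]
Inserting the defining relation of $S^\gamma$ from (\ref{E-gamma}), namely $\dot q^A_{(\kappa)}=\partial E/\partial p_A^{(\kappa)}$ evaluated at $(q,\gamma(q),\lambda)$ together with $\partial E/\partial\lambda^\alpha=0$, and comparing with the description (\ref{EandF-gamma}) of $S|_{Im(\gamma)}$, the base and the $\dot q$-components match tautologically. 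Hence condition (1) is equivalent to the coordinate identity
\[
\sum_{\mu,B}\frac{\partial\gamma^{(\kappa)}_A}{\partial q^B_{(\mu)}}\,\frac{\partial E}{\partial p_B^{(\mu)}}\bigg|_{(q,\gamma,\lambda)}=-\frac{\partial E}{\partial q^A_{(\kappa)}}\bigg|_{(q,\gamma,\lambda)},
\]
holding on $\{\partial E/\partial\lambda^\alpha=0\}$.

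For condition (2), I treat $E$ as a function on $\iota^*R$ with coordinates $(q^A_{(\kappa)},\lambda^\alpha)$. The chain rule yields
\[
dE=\sum_{\kappa,A}\left[\frac{\partial E}{\partial q^A_{(\kappa)}}+\sum_{\mu,B}\frac{\partial E}{\partial p_B^{(\mu)}}\,\frac{\partial\gamma^{(\mu)}_B}{\partial q^A_{(\kappa)}}\right]dq^A_{(\kappa)}+\sum_\alpha\frac{\partial E}{\partial\lambda^\alpha}\,d\lambda^\alpha,
\]
so $dE=0$ amounts to $\partial E/\partial\lambda^\alpha=0$ together with the vanishing of the bracketed expression. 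The decisive step is to invoke the closedness of $\gamma$: the condition $d\gamma=0$ forces the symmetry $\partial\gamma^{(\mu)}_B/\partial q^A_{(\kappa)}=\partial\gamma^{(\kappa)}_A/\partial q^B_{(\mu)}$, which is precisely what converts the bracketed equation into the identity extracted from (1). This yields the equivalence in both directions.

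The main obstacle is not analytic but notational: one must track carefully the dual role of $E$ as a function on $R$ generating the Lagrangian submanifold $S$ in $TT^*T^{k-1}Q$, and as its pull-back on $\iota^*R$ generating $S|_{Im(\gamma)}$, and ensure the constraint $\partial E/\partial\lambda^\alpha=0$ is treated consistently on both sides. Conceptually, closedness of $\gamma$ is exactly the hypothesis that makes $Im(\gamma)\subset T^*T^{k-1}Q$ a Lagrangian section, and it is this Lagrangian character that permits the implicit dynamics in $TT^*T^{k-1}Q$ to be pulled back along $\gamma$ to an implicit equation on $T^{k-1}Q$; the symmetry of the Jacobian of $\gamma$ is the precise obstruction-free condition that allows the two formulations to coincide.
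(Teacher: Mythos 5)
Your proposal is correct and follows essentially the same route as the paper's own proof: express condition (1) as the local identity $\sum_{\mu,B}\frac{\partial\gamma^{(\kappa)}_A}{\partial q^B_{(\mu)}}\frac{\partial E}{\partial p_B^{(\mu)}}=-\frac{\partial E}{\partial q^A_{(\kappa)}}$ on the constraint set, compute $dE(q,\gamma(q),\lambda)$ by the chain rule, and use the symmetry of the Jacobian of $\gamma$ coming from $d\gamma=0$ to identify the two. If anything, your index bookkeeping (full summation over $\mu,B$) is cleaner than the paper's equation with the $\delta^\kappa_{\hat\kappa}$ factor, but the argument is the same.
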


\begin{proof} 
The one-form $\gamma=\gamma^{(\kappa)}_Adq^A_{(\kappa)}$ is closed, that is, ${\partial \gamma^{(\kappa)}_A}/{\partial q^B_{(\hat{\kappa})}}={\partial \gamma^{(\hat{\kappa})}_B}/{\partial q^A_{(\kappa)}}$. The first assertion in Theorem \ref{nHJT} can be written locally as
\begin{equation} \label{lift}
\delta^\kappa_{\hat{\kappa}}\frac{\partial \gamma^{(\kappa)}_A}{\partial q^A_{(\hat{\kappa})}}\frac{\partial E}{\partial p^{(\kappa)}_A}+\frac{\partial E}{\partial q^A_{(\kappa)}}=0, 
\end{equation}
for $A=1,\dots,n$, and $\hat{\kappa},\kappa=1,\dots,k-1$ and with the condition $\partial E/ \partial \lambda^\alpha =0$. Let us now compute
\begin{eqnarray}\label{pfM}
dE(q_{(\kappa)},\gamma^{(\kappa)},\lambda)&=&\frac{\partial E}{\partial q_{(\kappa)}^A}dq_{(\kappa)}^A+\frac{\partial E}{\partial p^{(\kappa)}_A}\gamma^{(\kappa,\hat{\kappa})}_Adq_{(\hat{\kappa})}^A +\frac{\partial E}{\partial \lambda^\alpha}d\lambda^\alpha
 \end{eqnarray}
 Note that, after the substitution of (\ref{lift}) into (\ref{pfM}) and by employing the closure of the one-form, we conclude that the exterior derivative
 of $E$ vanishes when $p^{(\kappa)}=\gamma^{(\kappa)}$. 
 \end{proof}

 \subsubsection{The Morse family method - Ostrogradsky Momenta}

Now, we come to the problem of deciding the total space $R$ of the bundle $T^{*}T^{k-1}Q$. We are proposing two alternative ways for this. In this case our interest is focused in the Lagrangian submanifolds generated by a Lagrangian function. Accordingly, consider a Lagrangian function depending on higher order differential terms on the higher order tangent bundle $T^kQ$ of the configuration space $Q$. If $Q$ is an $n$-dimensional manifold with a local chart $(q_{(0)}^A)$, then $T^kQ$ is a $(k+1)\times n$-dimensional manifold with the induced local chart $(q_{(0)}^A,q_{(1)}^A,\dots, q_{(k)}^A)$. Now, consider the Whitney product 
\begin{equation} \label{W-1}
W=T^kQ\times_{T^{k-1}Q} T^*T^{k-1}Q
\end{equation} 
equipped with the local coordinates 
\begin{equation}
(q_{(\kappa)}^A,q_{(k)}^A,p^{(\kappa)}_A)=(q_{(0)}^A,\dots,q_{(k-1)}^A, q_{(k)}^A,p^{(0)}_A,\dots p^{(k-1)}_A)
\end{equation} 
of the higher order tangent bundle $T^kQ$ and the iterated cotangent bundle $T^*T^{k-1}Q$ fibered over $T^{k-1}Q$. Here, we have assumed the canonical coordinates $(q_{(\kappa)}^A,p^{(\kappa)}_A)$ on $T^*T^{k-1}Q$ where $\kappa$ runs from $0$ to $k-1$. Note that, we can realize this Whitney product as the total space of the smooth fiber bundle
\begin{equation}\label{Mso}
  \tau: T^kQ\times_{T^{k-1}Q} T^*T^{k-1}Q\mapsto T^*T^{k-1}Q: (q_{(\kappa)}^A,q_{(k)}^A,p^{(\kappa)}_A) \mapsto (q_{(\kappa)}^A,p^{(\kappa)}_A),
\end{equation}
where the base is $T^*T^{k-1}Q$. In this fibration the fibers are given by $(q_{(k)}^A)$ and they are $n$-dimensional.

For a given higher order Lagrangian $L=L(q_{(0)}^A,\dots, q_{(k)}^A)$, the corresponding energy function $E$ is defined on the Whitney product $T^kQ\times_{TQ} T^*T^{k-1}Q$ and explicitly given by
\begin{equation}\label{Morse-Ost}
E(q_{(\kappa)}^A,q_{(k)}^A,p^{(\kappa)}_A)=p^{(0)}_Aq_{(1)}^A+p^{(1)}_Aq_{(2)}^A+\dots + p^{(k-1)}_Aq_{(k)}^A-L.
\end{equation}
It is immediate to see that $E$ is a Morse family and that it generates a Lagrangian submanifold of the cotangent bundle $T^{*}T^*TQ$,
as it was mentioned in the theory on Morse families \eqref{MF-SSS}. Diagrammatically, we replace the total space $R$ with the Whitney product in \eqref{W-1} with the projection \eqref{Mso}. Hence, the Lagrangian submanifold $D$ in (\ref{EandF}) takes the particular form 
\begin{equation} \label{D-1--}
\left({q}_{(\kappa)}^A,{p^{(\kappa)}_A};-\frac{\partial L}{\partial {q}_{(0)}^A}, p^{(0)}_A-\frac{\partial L}{\partial q_{(1)}^A}, \dots, 
p^{(k-2)}_A-\frac{\partial L}{\partial q_{(k-1)}^A},{q}_{(1)}^A,\dots q_{(k)}^A\right)
\end{equation}  
equipped with constraints 
\begin{equation} \label{D-1-}
p^{(k-1)}_A- \frac{\partial L}{\partial  q_{(k)}^A}={0}.
\end{equation}
See that the Lagrangian submanifold exhibited in (\ref{D-1--}) and (\ref{D-1-}) is in $T^*T^*T^{k-1}Q$ where $q_{(k)}^A$ are auxiliary variables presenting the implicit character of the system. In this framework, Ostrogradsky momenta are given by
\begin{equation} \label{OstMom}
{p}^{(\kappa)}=\sum_{j=\kappa}^{k-1}\left(-\frac{d}{dt}\right)^{j-\kappa}\left(\frac{\partial L}{%
\partial q^{(\kappa+1)}}\right)
\end{equation}
where $\kappa$ runs from $0$ to $k-1$.

\medskip

\noindent \textbf{{Legendre transformation by means of Tulczyjew's triple.}} 
Using the right wing of the Tulczyjew' triple (\ref{TT}), and referring directly to the musical isomorphism $\Omega_{T^{k-1}Q}^T$ in (\ref{morph}), we map the Lagrangian submanifold in (\ref{D-1--}) and (\ref{D-1-}) to $TT^*T^{k-1}Q$. This reads  
\begin{equation}\label{LagSub2ndLag}
({q}_{(\kappa)}^A,{p^{(\kappa)}_A};{q}_{(1)}^A,\dots, q_{(k)}^A, \frac{\partial L}{\partial {q}_{(0)}^A}, \frac{\partial L}{\partial q_{(1)}^A}-p^{(0)}_A, \dots, 
\frac{\partial L}{\partial q_{(k-1)}^A}-p^{(k-2)}_A ). 
\end{equation}
The submanifold in (\ref{D-1--}) and (\ref{D-1-})
is a Lagrangian submanifold of $TT^*T^{k-1}Q$. The dynamics in this submanifold is represented by a systems of implicit differential equations
\begin{eqnarray} \label{impl-Ost}
\begin{split}
\dot{q}_{(0)}^A&={q}_{(1)}^A,\dots, \dot{q}_{(k-1)}^A=q_{(k)}^A, \quad \dot{p}^{(0)}_A=\frac{\partial L}{\partial {q}_{(0)}^A},\\ \dot{p}^{(1)}_A&=\frac{\partial L}{\partial q_{(1)}^A}-p^{(0)}_A, \dots, \dot{p}^{(k-1)}_A=\frac{\partial L}{\partial q_{(k-1)}^A}-p^{(k-2)}_A ,\quad 
p^{(k-1)}_A- \frac{\partial L}{\partial  q_{(k)}^A}={0}
\end{split}
\end{eqnarray}
equipped with the constraints given in \eqref{D-1-}.
It is immediate now to check that the Lagrangian submanifold (\ref{D-1--}) and (\ref{D-1-}), or the system of implicit equations (\ref{impl-Ost}) correspond to the higher order Euler-Lagrange equations
\begin{equation}\label{ele}
\frac{\partial L}{\partial
q_{(0)}^A}-\frac{d}{dt}\frac{\partial L}{\partial {q}_{(1)}^A}+
\frac{d^{2}}{dt^{2}}\frac{\partial L}{\partial {q}_{(2)}^A}\dots+(-1)^k\frac{d^k}{dt^k}\frac{\partial L}{\partial {q}_{(k)}^A} =0.
\end{equation}
Note that these identifications are independent of the regularity of the Lagrangian functions. 
\bigskip 

\noindent  \textbf{{Hamilton-Jacobi Equations.}}
Introduce a closed one-form $\gamma$ on $T^{k-1}Q$ given locally by 
\begin{equation} \label{gamma-k-1}
\gamma= \gamma^{(\kappa)}_A dq^A_{(\kappa)} = {\gamma}^{(0)}_A dq^A_{(0)}+ {\gamma}^{(1)}_A dq^A_{(1)}+\dots + {\gamma}^{(k-1)}_A dq^A_{(k-1)}.
\end{equation}
Now we apply the implicit Hamilton-Jacobi Theorem (\ref{nHJT}) to the first order implicit system given in \eqref{impl-Ost}, which is equivalent to the higher order Euler-Lagrange system. More concretely, we are employing the second condition (\ref{IHHJE}) in Theorem (\ref{nHJT}) to the present case. This reads 
$$d\left({\gamma^{(0)}_A} {q_{(1)}^A}+{\gamma^{(1)}_A} {q_{(2)}^A}
+{\gamma^{(k-1)}_A} {q_{(k)}^A}
-L({q}_{(0)},{q_{(1)}},\dots,{q_{(k)}})\right)=0.$$
Accordingly, we compute the following system of equations
\begin{equation}\label{HJfor2ndOr2--}
\begin{split}
\frac{\partial \gamma^{(0)}_A} {\partial q^B_{(0)} } {q_{(1)}^A}
+\dots +
\frac{\partial \gamma^{(k-1)}_A} {\partial q^B_{(0)} } q_{(k)}^A - 
\frac{\partial L}{\partial q^B_{(0)}} =0,
\\
\frac{\partial \gamma^{(0)}_A} {\partial q^B_{(1)} } {q_{(1)}^A}
+ \gamma^{(0)}_B+ \dots +
\frac{\partial \gamma^{(k-1)}_A} {\partial q^B_{(1)} } q_{(k)}^A - 
\frac{\partial L}{\partial q^B_{(1)}} =0,
 \\
 \dots
 \\
 \frac{\partial \gamma^{(0)}_A} {\partial q^B_{(k-1)} } {q_{(1)}^A}
+ \frac{\partial \gamma^{(1)}_A} {\partial q^B_{(k-1)} } {q_{(1)}^A} + \dots + \gamma^{(k-2)}_{B}+
\frac{\partial \gamma^{(k-1)}_A} {\partial q^B_{(k-1)} } q_{(k)}^A - 
\frac{\partial L}{\partial q^B_{(1)}} =0
 \\
\gamma^{(k-1)}_A-\frac{\partial L}{\partial q^A_{(k)}} =0.
\end{split}
\end{equation}

Since $\gamma$ is a closed one-form, then in a local chart, one may take $\gamma$ as the exterior derivative $dW$ of a real-valued function $W$ on $T^{k-1}Q$. In this case, we integrate the system as 
\begin{equation}
\frac{\partial W}{\partial q^A_{(0)}} {q_{(1)}^A}+\frac{\partial W}{\partial q^A_{(1)}}  {q_{(2)}^A}+\dots+\frac{\partial W}{\partial q^A_{(k-1)}} {q_{(k)}^A}-L({q}_{(0)},{q_{(1)}},\dots, {q_{(k)}})=0, \qquad \frac{\partial W}{\partial q^A_{(k-1)}} - \frac{\partial L}{\partial q^A_{(k)}} =0.
\end{equation}

\bigskip
\noindent  \textbf{{Hamilton-Jacobi equations for nondegenerate cases.}}
Note that, we may solve the Lagrange multipliers ${q_{(k)}^A}$ from the definition of conjugate momenta $p^{(k-1)}_A$ using the constraint (\ref{D-1-}) if the matrix $\left[{\partial^2 L}/{\partial q_{(k)}^A \partial q_{(k)}^B}\right]$ is nondegenerate. In this case, the solution has the form $$q_{(k)}^A= \Sigma^A\left({q}_{(0)},q_{(1)},...,q_{(k-1)},{p}^{(1)}\right).$$
Further, in a local chart, one may take $\gamma$ as the exterior derivative $dW$ of a real-valued function $W$ on $TQ$. In this case the requirement that $E$ is constant on the image of $dW$ results in a Hamilton-Jacobi equation in form
\begin{equation}\label{HJ-SO}
\frac{\partial W}{\partial q^A_{(0)}} {q_{(1)}^A}+\frac{\partial W}{\partial q^A_{(1)}}  {q_{(2)}^A} +\dots+\frac{\partial W}{\partial q^A_{(k-1)}} {\Sigma}^A -L({q}_{(0)},{q_{(1)}},,\dots,q_{(k-1)},{\Sigma})=0.
\end{equation}

\subsubsection{The Morse family method - the Schmidt-Legendre transformation for second order systems} \label{S-2-HJ}

We start by recalling some basics on the acceleration bundle and refer the reader to \cite{EsGu18} for further details. 
\bigskip

\noindent \textbf{{Acceleration bundle.}} Consider the set $K_q( Q)$ of smooth curves passing through $q\in Q$ whose first derivatives vanish at $q$, that is
\begin{equation}
K_q(Q)=\left\{\gamma\in C_{q}(Q):D(f\circ \gamma) (0)=0, \quad \forall f:Q\mapsto\mathbb{R} \right\}.
\end{equation}
Define an equivalence relation on $K_q(Q)$ by saying that two curves $\gamma$ and $\gamma'$ are equivalent if the second derivatives of $\gamma$ and $\gamma'$ are equal at the point $q$, that is if
\begin{equation*}
\gamma(0)=\gamma'(0)=q, \qquad D^2(f\circ \gamma) (0)=D^2 (f\circ\gamma') (0), \qquad \forall f:Q\mapsto\mathbb{R}
\end{equation*}
for all real valued functions $f$ on $Q$.  An equivalence class is denoted by $\mathfrak{a}\gamma(0)$. The set of all of these equivalence classes is called  acceleration space $A_{q}Q$ at $q\in Q$.
If $Q$ is an $n$-dimensional manifold then union of all acceleration spaces
\begin{equation*}
AQ=\bigsqcup_{q\in Q}
A_{q}Q.
\end{equation*}
is a $2n$-dimensional manifold called as the acceleration bundle of $Q$. The induced local coordinates on $AQ$ are defined to be 
\begin{align}
( {q}_{(0)}^A,     {a}_{(0)}^A     ):AQ\longrightarrow \mathbb{R}^{2n}:&\mathfrak {a}\gamma(0)\longrightarrow (  {q}_{(0)}^A \circ \gamma(0),D^2(  {q}_{(0)}^A \circ \gamma)(0)).
\end{align}
We note that, the third order tangent bundle $T^{3}Q$ and the tangent bundle of the acceleration
bundle are isomorphic. If the induced coordinates assumed on the tangent bundle $TAQ$ are  $\left( {q}_{(0)}^A, {a}_{(0)}^A;{q}_{(1)}^A, {a}_{(1)}^A\right)$, then the isomorphism $S$ locally takes the form
\begin{equation}
S:TAQ\rightarrow T^{3}Q:\left( {q}_{(0)}^A, {a}_{(0)}^A;{q}_{(1)}^A, {a}_{(1)}^A\right)
\rightarrow\left( {q}_{(0)}^A, {q}_{(1)}^A; {a}_{(0)}^A, {a}_{(1)}^A\right).
\label{iso}
\end{equation}

\bigskip
\noindent  \textbf{{Gauge invariance of the Lagrangian formalism and Schmidt's method.}}
Consider a second order Lagrangian function%
\begin{equation}
{L=L}\left( q_{(0)},q_{(1)},q_{(2)}\right)  \label{Lag}
\end{equation}
on $T^{2}Q$. The gauge invariance of the second order Euler-Lagrange equations implies that the equations of motion generated by $L$ and $L+(d/dt)F$ are the same for any smooth function $F$ on $T^2Q$. When we consider $F$, we come up with a third order Lagrangian  
\begin{eqnarray}
\begin{split}
\hat{L}\left(  q_{(0)},q_{(1)},q_{(2)}, {q}_{(3)}\right) ={L}%
\left( q_{(0)},q_{(1)},q_{(2)} \right) +\frac{d}{dt}F\left(
q_{(0)},q_{(1)},q_{(2)} \right) \\ ={L}\left( q_{(0)},q_{(1)},q_{(2)} \right) +\frac{\partial F}{%
\partial q_{(0)}^A }{q_{(1)}^A }+\frac{\partial F}{\partial q_{(1)}^A }{q_{(2)}^A}+\frac{\partial F}{\partial q_{(2)}^A }{q_{(3)}^A}.  \label{Lhat}
\end{split}
\end{eqnarray}
defined in $T^3Q$ with local coordinates $\left( q_{(0)},q_{(1)},q_{(2)},q_{(3)}\right)$.
By recalling the isomorphism in (\ref{iso}), we pull back the
Lagrangian $\hat{L}$ to the tangent bundle $TAQ$, so it results in a first order Lagrangian function
\begin{equation}
L_{2}:TAQ \mapsto \mathbb{R}: (  {q}_{(0)}^A, {a}_{(0)}^A;{q}_{(1)}^A, {a}_{(1)}^A) \mapsto  {L}\left( q_{(0)},q_{(1)},a_{(0)} \right) +\frac{\partial F}{%
\partial q_{(0)}^A }{q_{(1)}^A }+\frac{\partial F}{\partial q_{(1)}^A }{a_{(0)}^A}+\frac{\partial F}{\partial a_{(0)}^A }{a_{(1)}^A}  \label{L2}
\end{equation}
defined on the first order tangent bundle $TAQ$. The Euler-Lagrange equations generated by $L_{2}$ are computed to be
\begin{equation}
\frac{\partial L_{2}}{\partial  q_{(0)}^A  }-\frac{d}{dt}\frac{\partial L_{2}}{%
\partial q_{(1)}^A}={0}, \qquad \frac{\partial L_{2}}{\partial  a_{(0)}^A  }-\frac{d}{dt}\frac{\partial L_{2}}{%
\partial a_{(1)}^A}={0}.  \label{ELtwo}
\end{equation}
The second set of equations in (\ref{ELtwo}) can be rewritten as
\begin{equation}
\left( \frac{\partial L}{\partial a_{(0)}^A }+\frac{\partial F}{
\partial q_{(1)}^A }\right) +\frac{\partial^{2}F}{\partial q_{(1)}^A \partial{     a_{(0)}^B}}(     a_{(0)}^B-
q_{(2)}^B)={0}.  \label{cons}
\end{equation}
Assume that the second order Lagrangian function $L$ is a nondegenerate, that is the rank of the Hessian matrix $\left [{\partial^{2}L}/{
\partial {a}_{(0)} ^A \partial {a}_{(0)} ^B } \right ]$ is maximal, and assume also that the
auxiliary function $F$ satisfies
\begin{equation}
\frac{\partial L}{\partial a_{(0)}^A }+\frac{\partial F}{
\partial q_{(1)}^A} =0.  \label{chi}
\end{equation}
In this case, the non-degeneracy of the matrix $\left [{\partial^{2}L}/{
\partial {a}_{(0)} ^A \partial {a}_{(0)} ^B } \right ]$ implies the non-degeneracy of the matrix $%
[\partial^{2}F/\partial{a_{(0)}^A }\partial {q_{(1)}^B}]$. Given this, the equations (\ref{cons}) reduce to the set of constraints $a_{(0)}^B-
q_{(2)}^B=0$. In this case, the first set in (\ref{ELtwo}) results in the same Euler-Lagrange equations generated by $L$ in (\ref{Lag}). 

\bigskip
\noindent  \textbf{{Morse family generating the Lagrangian submanifold.}}
Assuming the dual coordinates $\left( q^A_{(0)},{a}^A_{(0)},p_A^{(0)},{\pi}_A^{(0)}\right)$ on the cotangent bundle $T^*AQ$, define the following Morse family
\begin{align}  \label{HMF}
E&\left( q_{(0)},{a}_{(0)},p^{(0)},{\pi}^{(0)},{q_{(1)}},{ a_{(1)}}\right)=
p_A^{(0)}  {q_{(1)}^A}+{\pi}_A^{(0)}{ a_{(1)}^A}-L_2\left( {q}_{(0)}, {a}_{(0)};{q}_{(1)}, {a}_{(1)} \right)\\
&=
p_A^{(0)}  {q_{(1)}^A}+{\pi}_A^{(0)}{ a_{(1)}^A}-{L}({q}_{(0)},{q}_{(1)}, {a}_{(0)}) -\frac{\partial F}{%
\partial q_{(0)}^A }{q_{(1)}^A }-\frac{\partial F}{\partial q_{(1)}^A }{a_{(0)}^A}-\frac{\partial F}{\partial a_{(0)}^A }{a_{(1)}^A}  \notag
\end{align}
on the Whitney sum $TAQ\times_{AQ} T^{\ast}AQ$ over the base manifold $T^{\ast}AQ$. The conjugate momenta are defined by the equations
\begin{align}\label{MorseSc12}
{0}&=\frac{\partial E}{\partial {q_{(1)}^A}}=%
p_A^{(0)}  -\frac{\partial L_2}{\partial {q_{(1)}^A}},
\qquad {0%
}=\frac{\partial E}{\partial { a_{(1)}^A}}={\pi}_A^{(0)}-%
\frac{\partial L_2}{\partial { a_{(1)}^A}}={\pi}_A^{(0)}-\frac{\partial F}{\partial a_{(0)}^A}.
\end{align}
If we substitute the momenta ${\pi}_A^{(0)}$ in the definition of the Morse family (\ref%
{HMF}) which makes the family free of $(a_{(1)}^A)$, it results in
\begin{equation} \label{MorseSc1}
E(  q_{(0)},{a}_{(0)},p^{(0)},{\pi}^{(0)},{q_{(1)}})=
p_A^{(0)}  {q_{(1)}^A}-{L}({q}_{(0)},{q}_{(1)}, {a}_{(0)}) -\frac{\partial F}{%
\partial q_{(0)}^A }{q_{(1)}^A }-\frac{\partial F}{\partial q_{(1)}^A }{a_{(0)}^A}
\end{equation}
defined on the Whitney sum $T^*AQ\times_Q TAQ$.
A further reduction on the Morse family is possible. For this, recall the assumption that the matrix $%
[\partial^{2}F/\partial{a_{(0)}^A }\partial {q_{(1)}^B}]$ is nondegenerate. So that we can, at least locally, solve $q_{(1)}^A$ in terms of the momenta from the second equation ${\pi}_A^{(0)}={\partial F} (
q_{(0)},q_{(1)},q_{(2)} )/{\partial a_{(0)}^A}$ in (\ref{MorseSc12}). Let us write this solution as
\begin{equation}
{q_{(1)}^A}={z}^A\left(   q_{(0)}  ,a_{(0)},{\pi}^{(0)}\right) .
\label{Z}
\end{equation}
This results with a well-defined Hamiltonian function
\begin{equation}\label{canH}
H\left(  q_{(0)},{a}_{(0)};p^{(0)},{\pi}^{(0)}\right) =p^{(0)}_A{z^A}
-L\left( { q_{(0)} ,z, a_{(0)} }\right) -\frac{\partial F}{%
\partial q_{(0)}^A }{z^A }-\frac{\partial F}{\partial q_{(1)}^A }{a_{(0)}^A}    
\end{equation}
 on $T^*AQ$.

\bigskip
\noindent  \textbf{{Hamilton-Jacobi theory in the acceleration bundle framework.}}
Now, we are ready to write the Hamilton-Jacobi theory for second order nondegenerate Lagrangian functions. For this, assume a real valued function $W$ defined on the acceleration bundle $AQ$ and a Hamiltonian vector field $X_H$ on $T^{*}AQ$ associated to the Hamiltonian function $H$ in (\ref{canH}). We can define a vector field $X_H^{\gamma}$ on the acceleration bundle $AQ$
\begin{equation}\label{gammarelated}
 X_H^{\gamma}=T\pi_{AQ}\circ X_H\circ \gamma.
\end{equation}
according to the commutativity of the diagram
\begin{equation}\label{Xg}
  \xymatrix{ T^{*}AQ
\ar[dd]^{\pi_{AQ}} \ar[rrr]^{X_H}&   & &TT^{*}AQ\ar[dd]^{T\pi_{AQ}}\\
  &  & &\\
 AQ\ar@/^2pc/[uu]^{\gamma=dW}\ar[rrr]^{X_H^{\gamma}}&  & & TAQ }
\end{equation}
\begin{theorem}[HJ theorem in the acceleration bundle]\label{HJacc}
Let $\gamma=dW$ be a closed one-form on $AQ$, we say that $\gamma$ is a solution of the Hamilton--Jacobi problem in the acceleration bundle if the following two equivalent conditions are satisfied
\begin{enumerate}
\item The vector fields $X_{H}$ and $X_{H}^{\gamma }$ are $\gamma$-related
\item $d\left( H\circ \gamma \right)=0.$
\end{enumerate}
\end{theorem}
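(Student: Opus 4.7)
The plan is to mirror the proof of the classical Hamilton--Jacobi Theorem \ref{HJT}, transplanted to the symplectic manifold $(T^{*}AQ,\Omega_{AQ})$. Since $\gamma=dW$ is closed, its image $\mathrm{Im}(\gamma)\subset T^{*}AQ$ is a Lagrangian submanifold, and this will be the geometric hinge of the argument. The Schmidt--Legendre reduction has already delivered a genuine (explicit) Hamiltonian function $H$ on $T^{*}AQ$, equation (\ref{canH}), together with its Hamiltonian vector field $X_H$, so the classical framework is available verbatim.

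First, I would show that condition (1) is equivalent to the statement that $X_H\circ\gamma$ takes values in $T(\mathrm{Im}(\gamma))$. By the defining formula $X_H^{\gamma}=T\pi_{AQ}\circ X_H\circ\gamma$, the $\gamma$-relatedness condition reads $T\gamma\circ X_H^{\gamma}=X_H\circ\gamma$, that is, $T\gamma\circ T\pi_{AQ}\circ X_H\circ\gamma=X_H\circ\gamma$. Since $\gamma$ is a section of $\pi_{AQ}$, the map $T\gamma$ is an embedding of $TAQ$ onto $T(\mathrm{Im}(\gamma))$ over the points of $\mathrm{Im}(\gamma)$, with left inverse $T\pi_{AQ}$. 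Therefore the displayed equality holds precisely when $X_H(\gamma(x))\in T_{\gamma(x)}\mathrm{Im}(\gamma)$ for every $x\in AQ$, i.e.\ when $X_H$ is tangent to the Lagrangian submanifold $\mathrm{Im}(\gamma)$.

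Second, I would combine the defining relation $i_{X_H}\Omega_{AQ}=dH$ with the Lagrangian character of $\mathrm{Im}(\gamma)$. For any $v\in T(\mathrm{Im}(\gamma))$ we have $dH(v)=\Omega_{AQ}(X_H,v)$. If $X_H$ is tangent to $\mathrm{Im}(\gamma)$, this pairing vanishes because $T(\mathrm{Im}(\gamma))=T(\mathrm{Im}(\gamma))^{\perp}$; conversely, if $dH$ annihilates $T(\mathrm{Im}(\gamma))$, then $\Omega_{AQ}(X_H,\cdot)$ vanishes on that subspace, so $X_H\in T(\mathrm{Im}(\gamma))^{\perp}=T(\mathrm{Im}(\gamma))$. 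Pulling back along $\gamma$ converts $dH$ vanishing on $T(\mathrm{Im}(\gamma))$ into $\gamma^{*}dH=d(H\circ\gamma)=0$ on $AQ$, which is exactly condition (2). Chaining the two equivalences closes the argument.

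The technical step that needs the most care is the transition between $\gamma$-relatedness and tangency: one must verify that $T\gamma$ really does hit all of $T(\mathrm{Im}(\gamma))$ and that its image matches the rank of $TAQ$, which relies on $\gamma$ being a section and therefore $T\pi_{AQ}\circ T\gamma=\mathrm{id}_{TAQ}$. Everything else is the standard symplectic/linear-algebra manipulation on a Lagrangian submanifold; the explicit form of $H$ from (\ref{canH}) plays no distinguished role beyond guaranteeing that $X_H$ exists globally on $T^{*}AQ$, which is precisely why the Schmidt construction has been invoked before stating the theorem.
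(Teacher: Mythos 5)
Your proof is correct and is exactly the argument the paper relies on: since the Schmidt reduction yields an explicit Hamiltonian $H$ on $T^{*}AQ$, Theorem \ref{HJacc} is just the classical geometric Hamilton--Jacobi theorem (Theorem \ref{HJT}) on $(T^{*}AQ,\Omega_{AQ})$, and the paper in fact states it without a separate proof. Your two steps --- identifying $\gamma$-relatedness with tangency of $X_H$ to the Lagrangian submanifold $\mathrm{Im}(\gamma)$ via $T\pi_{AQ}\circ T\gamma=\mathrm{id}$, and then trading tangency for $\gamma^{*}dH=0$ using $T(\mathrm{Im}(\gamma))=T(\mathrm{Im}(\gamma))^{\perp}$ --- are the standard and complete justification.
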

We can rewrite the second condition as an equation the function $W=W({q_{(0)},a_{(0)}})$ satisfying the partial differential equation
\begin{equation} \label{HJ-Sch}
\frac{\partial W}{\partial q_{(0)}^A}{z^A}\left(   q_{(0)}  ,a_{(0)},\frac{\partial W}{\partial q_{(0)}}\right)
-L\left( {q_{(0)},z,a_{(0)}}\right) -\frac{\partial F}{\partial  q_{(0)}^A }{z^A}\left(   q_{(0)}  ,a_{(0)},\frac{\partial W}{\partial q_{(0)}}\right)-\frac{\partial F}{\partial q_{(1)}^A }{a_{(0)}^A}    =E,      
\end{equation}
where $E$ is a constant.

Let us write the second condition explicitly for a particular case. 
Determine the auxiliary function $F(     {a} _{(0)}    ,{q_{(1)}})=    - \delta_{AB} {a}_{0} ^A     {q_{(1)}^B}$ in (\ref{L2}). In the light of condition (\ref{chi}), the Lagrangian is quadratic with respect to  second order time derivatives. More concretely, we see that $\partial L / \partial {a^A_{(0)}} =\delta_{AB} a^B_{(0)}$. In this case, the Hamiltonian function (\ref{canH}) reduces to
 \begin{equation}
 H\left(  q_{(0)},{a}_{(0)};p^{(0)},{\pi}^{(0)}\right) = \delta^{AB}p^{(0)}_A \pi^{(0)}_B
-L\left( { q_{(0)} ,\pi^{(0)}, a_{(0)} }\right) 
+\delta_{AB} {a}_{(0)} ^A {a}_{(0)} ^B .
\end{equation}
In this case, for a closed one-form $$\gamma=\frac{\partial W}{ \partial  q_{(0)}^A} dq_{(0)}^A+ \frac{\partial W}{ \partial a_{(0)}^A } da_{(0)}^A,$$ the second condition in Theorem \eqref{HJacc} provides the following Hamilton-Jacobi equation for the nondegenerate second order Lagrangian function $L$ 
\begin{equation}
\delta^{AB}\frac{\partial W}{ \partial  q_{(0)}^A} \frac{\partial W}{ \partial  q_{(0)}^B} -L\left( {q}_{(0)},\frac{\partial W}{\partial a_{(0)}},{a}_{(0)}\right) +\delta_{AB} {a}_{(0)} ^A {a}_{(0)} ^B=E.
\end{equation} 

\subsubsection{Comparisons of HJ formalisms for nondegenerate cases} 

Let us consider again  the auxiliary function $F=F({q_{(0)},q_{(1)},q_{(2)}})$ on the second order tangent bundle $T^2Q$ and let us write $T^2Q$ locally as a product space $AQ\times_{Q} TQ$. Here, the function will have a form $F=F({q_{(0)},q_{(1)},a_{(0)}})$.  In this case, the cotangent bundle of $T^*T^2Q$ can be identified with the product space $T^*AQ\times T^*TQ$. The image of the exterior derivative $dF$ determines a Lagrangian submanifold of $T^*AQ\times T^*TQ$ hence a symplectic diffemorphism between $T^*AQ$ and $T^*TQ$. Explicitly, the symplectic
diffeomorphism is computed to be%
\begin{eqnarray} \label{sym-diff}
T^{\ast} AQ\rightarrow T^{\ast}TQ: \left(   q_{(0)}^A,{a}_{(0)}^A;p^{(0)}_A,{\pi}^{(0)}_A\right) \rightarrow  \\ \left(    q_{(0)}^A  ,%
{z}^A\left(   q_{(0)}  ,{a}_{(0)},{\pi}^{(0)}\right) ,p^{(0)}_A-\frac{\partial F}{\partial   q_{(0)}^A }%
\left(   q_{(0)}  ,{z}\left(   q_{(0)}  , {a}_{(0)},
{\pi}^{(0)}\right) ,{a}_{(0)}\right) ,-\frac{\partial F}{%
\partial{z}^A}\right).\notag
\end{eqnarray}
This symplectic diffeomorphism establishes the link between the Morse families  \eqref{Morse-Ost} (when $k=1$) and \eqref{MorseSc1}. To see this directly, let us now pull back the Morse family $E$ given in (\ref{Morse-Ost}) by the mapping (\ref{sym-diff}). We compute the result as follows
\begin{eqnarray}
{p}^{(0)} {q_{(1)}}+{p}^{(1)} {q_{(2)}}-L({q}_{(0)},{q_{(1)}},{q_{(2)}})\nonumber\\ = \left({p}_{q}-\frac{\partial F}{\partial  {q}  }%
\left(   {q}  ,{z}\left(   {q}  , {a},
{p}_{a}\right),{a}\right)\right) {z}-\frac{\partial F}{%
\partial{z}} {a}-L({q},{z},{a})\nonumber\\
= {p}_{q} {z}-L({q},{z},{a})-
\frac{\partial F}{\partial  {q}  }%
{z}-\frac{\partial F}{\partial{z}}     {a} 
\end{eqnarray}
which is exactly the Hamiltonian function in (\ref{canH}). Here, we have employed the identification $a_{(0)}^A=q_{(2)}^A$. The following examples compare the two methods we have exhibited so far.

\begin{example} \label{example}\normalfont
Let us consider a pure quadratic one-dimensional Lagrangian 
\begin{equation}\label{example5}
L=\frac{1}{2}\mu q_{(2)}^2
\end{equation}
 If we first apply the Ostragradski method, the momentum ${p}^{(1)}$ is computed to be $\mu{q_{(2)}}$. The Hamilton-Jacobi equation (\ref{HJ-SO}) for this system is 
\begin{equation}
\frac{\partial W}{ \partial q_{(0)}} {q_{(1)}}+ \frac{1}{2}\mu\left(\frac{\partial W}{\partial q_{(0)}}\right)^2=c.
\end{equation}

Let us now apply the Schmidt's method presented in Section (\ref{S-2-HJ}) to the Lagrangian \eqref{example5}. Condition (\ref{chi}) integrates the function $F$ as
\begin{equation}
 F=-\mu {a} {q_{(1)}}+g(q_{(0)})
 \end{equation}
  where $g$ is an arbitrary function which can be chosen as zero without loss of any generality. This enables us to use the Hamilton-Jacobi equation in (\ref{HJ-Sch}), which is exactly
\begin{equation}
-\frac{\partial W}{\partial q_{(0)}}\frac{\partial W}{\partial a_{(0)}}+\frac{1}{2}\mu a_{(0)}^2=c
\end{equation}
and which can be solved assuming that $\nabla_{a}W$ does not equal to zero, and rewrite the Hamilton-Jacobi problem in the form
 \begin{equation}
\frac{\partial W}{ \partial q_{(0)}} =\frac{\frac{1}{2}\mu a_0^2-c}{\frac{\partial W}{ \partial a_{(0)}}}=c_2
 \end{equation}
where $c_2$ is a constant. Its solution reads:
\begin{equation}
W(q_{(0)},a_{(0)})=c_2q_{(0)}+\frac{1}{6c_2}\mu a_{(0)}^3-\frac{c}{c_2}a_{(0)}.
\end{equation}
\end{example}

\subsubsection{The Morse family method - The Schmidt's method for the third order Lagrangians}
\label{tol} 

Let us start with a third order Lagrangian function ${L}( q_{(0)},q_{(1)},q_{(2)}, {q}_{(3)})$ defined on $T^3Q$. Recalling the local diffeomorphism in (\ref{iso}), we pull back the Lagrangian function $L$ to the tangent bundle $TAQ$ of the acceleration bundle. By this, we arrive at a first order Lagrangian function $L=L\left(q_{(0)},a_{(0)};q_{(1)},a_{(1)}\right) $. Now, we define a manifold $M$ with local coordinates ${m}$,  its tangent bundle $TM$ with coordinates $({m_{(0)}^A, m_{(1)}^B})$ and the first order Lagrangian function
\begin{equation}
L_{3} =L\left(q_{(0)},a_{(0)};q_{(1)}, a_{(1)}\right)+\frac{\partial F}{%
\partial q_{(0)}^A }{q_{(1)}^A }+\frac{\partial F}{\partial q_{(1)}^A }{a_{(0)}^A}+\frac{\partial F}{\partial a_{(0)}^A }{a_{(1)}^A} +\frac{\partial F}{%
\partial m_{(0)}^A }{m_{(1)}^A }.  \label{L3}
\end{equation}
on the tangent bundle $T(AQ\times M)$ equipped with local coordinates $$( {q}_{(0)}^A, {a}_{(0)}^A;{q}_{(1)}^A, {a}_{(1)}^A;{m}_{(0)}^A,{m}_{(1)}^A).$$ Here, the auxiliary function $F$ depends on $(
{q}_{(0)},{q}_{(1)},{a}_{(0)},{m}_{(0)})$. The Euler Lagrange equations generated by the Lagrangian $L_{3}$ are equal to the Euler-Lagrange equations generated by the third order Lagrangian function $L$ if the requirement
\begin{equation}
\det[\partial^{2}F/\partial{{q}_{(1)}^A}\partial {m}_{(0)}^B]\neq  0
\label{cond2}
\end{equation}
is assumed \cite{EsGu18}. 

We consider the conjugate momenta on $T^*(AQ\times M)$ determined locally by $(p^{(0)}_A,\pi^{(0)}_A,\mu^{(0)}_A)$ and the energy function associated with $L_3$ is 
\begin{eqnarray}
E&=& p^{(0)}_A {q}_{(1)}^A+\pi^{(0)}_A{a}_{(1)}^A+\mu^{(0)}_A{m}_{(1)}^A-L_3  \label{totHam--}
\\&=& p^{(0)}_A {q}_{(1)}^A+\pi^{(0)}_A{a}_{(1)}^A+\mu^{(0)}_A {m}_{(1)}^A-L-\frac{\partial F}{%
\partial q_{(0)}^A }{q_{(1)}^A }-\frac{\partial F}{\partial q_{(1)}^A }{a_{(0)}^A}  \notag \\&& \qquad \qquad -\frac{\partial F}{\partial a_{(0)}^A }{a_{(1)}^A} -
\frac{\partial F}{\partial m_{(0)}^A }{m_{(1)}^A}. \notag
\end{eqnarray}
Notice that this energy function is a Morse family on the Whitney sum $T(AQ\times M) \times T^*(AQ\times M)$ and, in accordance with the following diagram.
 \begin{equation}\label{Ham-Morse-Gen---}
\xymatrix{TT^{\ast }(AQ\times M)  \ar[dr]_{\tau_{T^{\ast }(AQ\times M)}\quad}
&& \ar[ll]_{\Omega_{(AQ\times M)}^{\sharp}} T^{\ast }T^{\ast }(AQ\times M)\ar[dl]^{\quad \pi _{T^{\ast}(AQ\times M)}} &T(AQ\times M) \times T^*(AQ\times M) \ar[d]^{\pi_2}
\\&T^{\ast}(AQ\times M)  \ar@{=}[rr]&& T^{\ast}(AQ\times M)},
\end{equation}
The family $E$ generates a Lagrangian submanifold ${D}$ of $T^\ast T^{\ast }(AQ\times M)$, and using the musical isomorphism $\Omega_{(AQ\times M)}^{\sharp}$, we map this Lagrangian submanifold to a Lagrangian submanifold $S$ of $TT^{\ast }(AQ\times M)$, that is a symplectic manifold equipped with the lifted symplectic two-form $\Omega_{(AQ\times M)}^T$. This Lagrangian submanifold exactly determines the third order Euler-Lagrange equations generated by the Lagrangian $L=L({q_{(0)},q_{(1)},q_{(2)}, q_{(3)}})$. 

Let us now apply the implicit Hamilton-Jacobi theorem to this case. Assume a closed one-form $\gamma$ on $AQ\times M$ given locally by
\begin{equation}
\gamma=\gamma_A dq^A_{(0)}+\alpha_A da^A_{(0)}+\beta_A d{m}^A_{(0)}.
\end{equation}
The restriction of the Lagrangian submanifold $S$ to the image space of $\gamma$ will be denoted by $S|_{Im(\gamma)}$. Then project $S|_{Im(\gamma)}$ to the tangent bundle $T(AQ\times M)$ by means of the tangent mapping $T\pi_{AQ\times M}$. This results in a (possibly non horizontal) submanifold $S^\gamma=T_{\pi_{AQ\times M}}(S|_{Im})$ of $T(AQ\times M)$. Let us depict these in the following diagram
\begin{equation}\label{Diag-HJ-S-3}
  \xymatrix{ T^{*}(AQ\times M)
\ar[dd]^{\pi_{AQ\times M}} \ar[rrr]&   & &TT^{*}(AQ\times M)\ar[dd]^{T\pi_{AQ\times M}}& S_{Im(\gamma)}\ar@{^{(}->}[l]\\
  &  & &\\
 AQ\times M\ar@/^2pc/[uu]^{\gamma=dW}\ar[rrr]&  & & T(AQ\times M) &S^\gamma\ar@{^{(}->}[l]}
\end{equation}

\begin{theorem}\label{HJT3A} (\textbf{HJ theorem for implicit third order Lagrangians in the  acceleration space}) 
A solution of the implicit Hamilton--Jacobi problem for third order Lagrangians in the acceleration space is a closed one-form $\gamma$ that fulfills the two following equivalent relations: 
\begin{enumerate}
\item The Lagrangian submanifold $S\vert_{Im({\gamma})}$ and the submanifold $S^{\gamma}$ are $\gamma$-related, that is $T\gamma(S^{\gamma})=S\vert_{Im({\gamma})}$.
\item $d(E({q}_{(0)}^A, {a}_{(0)}^A,{m}_{(0)}^A;\gamma_A,\alpha_A,\beta_A;{q}_{(1)}^A, {a}_{(1)}^A,{m}_{(1)}^A))=0$.
\end{enumerate} 

\end{theorem}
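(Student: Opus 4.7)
The plan is to follow the same structural argument used in the proof of Theorem \ref{nHJT}, now adapted to the three sets of coordinates $(q^A_{(0)}, a^A_{(0)}, m^A_{(0)})$ on the base $AQ\times M$ with conjugate momenta $(p^{(0)}_A, \pi^{(0)}_A, \mu^{(0)}_A)$ on $T^{\ast}(AQ\times M)$ and velocities $(q^A_{(1)}, a^A_{(1)}, m^A_{(1)})$ supplied by the fiber of the Whitney sum in \eqref{Ham-Morse-Gen---}. The Lagrangian submanifold $S$ is the one generated by the Morse family $E$ in \eqref{totHam--} via the musical isomorphism $\Omega^{\sharp}_{(AQ\times M)}$, and $S\vert_{Im(\gamma)}$ is obtained by pulling back $S$ under the inclusion $\iota:Im(\gamma)\hookrightarrow T^{\ast}(AQ\times M)$, substituting $p^{(0)}_A=\gamma_A$, $\pi^{(0)}_A=\alpha_A$, $\mu^{(0)}_A=\beta_A$. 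Its projection via $T\pi_{AQ\times M}$ is $S^\gamma$, which sits in $T(AQ\times M)$.

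First I would write out the closure condition for $\gamma=\gamma_A dq^A_{(0)}+\alpha_A da^A_{(0)}+\beta_A dm^A_{(0)}$ as the six symmetry equations between partial derivatives of $\gamma_A,\alpha_A,\beta_A$ with respect to the three base coordinates (the three ``diagonal'' symmetries together with the three ``cross'' symmetries). Next, from the explicit form \eqref{EandF2}, the condition $T\gamma(S^\gamma)=S\vert_{Im(\gamma)}$ translates, when evaluated along the section $p^{(0)}=\gamma$, $\pi^{(0)}=\alpha$, $\mu^{(0)}=\beta$, into the local system
\begin{align*}
\frac{\partial \gamma_B}{\partial q^A_{(0)}}\frac{\partial E}{\partial p^{(0)}_B}+\frac{\partial \alpha_B}{\partial q^A_{(0)}}\frac{\partial E}{\partial \pi^{(0)}_B}+\frac{\partial \beta_B}{\partial q^A_{(0)}}\frac{\partial E}{\partial \mu^{(0)}_B}+\frac{\partial E}{\partial q^A_{(0)}}&=0,\\
\frac{\partial \gamma_B}{\partial a^A_{(0)}}\frac{\partial E}{\partial p^{(0)}_B}+\frac{\partial \alpha_B}{\partial a^A_{(0)}}\frac{\partial E}{\partial \pi^{(0)}_B}+\frac{\partial \beta_B}{\partial a^A_{(0)}}\frac{\partial E}{\partial \mu^{(0)}_B}+\frac{\partial E}{\partial a^A_{(0)}}&=0,\\
\frac{\partial \gamma_B}{\partial m^A_{(0)}}\frac{\partial E}{\partial p^{(0)}_B}+\frac{\partial \alpha_B}{\partial m^A_{(0)}}\frac{\partial E}{\partial \pi^{(0)}_B}+\frac{\partial \beta_B}{\partial m^A_{(0)}}\frac{\partial E}{\partial \mu^{(0)}_B}+\frac{\partial E}{\partial m^A_{(0)}}&=0,
\end{align*}
supplemented by the constraint equations $\partial E/\partial q^A_{(1)}=\partial E/\partial a^A_{(1)}=\partial E/\partial m^A_{(1)}=0$ coming from the fiber directions of the Morse family (these are identically satisfied on $S$).

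Then I would compute the total differential of $E$ as a function of $({q}_{(0)}, {a}_{(0)}, {m}_{(0)}; \gamma(q_{(0)},a_{(0)},m_{(0)}), \alpha(\cdot), \beta(\cdot); q_{(1)}, a_{(1)}, m_{(1)})$ by the chain rule. Each of the three ``base'' differentials $dq^A_{(0)}, da^A_{(0)}, dm^A_{(0)}$ collects a direct partial of $E$ plus terms coming from the dependence of the substituted momenta on the base coordinates. Substituting the three local relations from the previous step into the coefficient of each base differential, and then using the six closedness identities to pair cross terms (for instance replacing $\partial\alpha_B/\partial q^A_{(0)}$ by $\partial\gamma_A/\partial a^B_{(0)}$ where needed), all terms cancel. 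The fiber differentials $dq^A_{(1)}, da^A_{(1)}, dm^A_{(1)}$ contribute $\partial E/\partial q^A_{(1)} dq^A_{(1)}$ etc., which vanish on $S$ by the Morse family constraint equations. This establishes $dE=0$ and hence the implication $(1)\Rightarrow(2)$. The converse is a bookkeeping reversal of the same computation: starting from $dE=0$ one reads off the three coefficient equations, which are exactly the local form of $T\gamma(S^\gamma)=S\vert_{Im(\gamma)}$.

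The principal difficulty I anticipate is not conceptual but notational: with three blocks of coordinates one must keep track of nine diagonal terms and six off-diagonal pairings coming from the closedness of $\gamma$, and verify that every cross term produced by the chain rule is absorbed by a matching closedness identity. Conceptually the proof is exactly parallel to Theorem \ref{nHJT}, because the Morse family, the special symplectic structure \eqref{Ham-Morse-Gen---}, and the section $\gamma$ play the same roles as in the Ostrogradsky case.
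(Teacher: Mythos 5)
Your proposal is correct and follows essentially the same route as the paper: the paper offers no separate proof of Theorem \ref{HJT3A}, treating it as the specialization of Theorem \ref{nHJT} to the base $AQ\times M$ with the Morse family \eqref{totHam--}, and its proof of Theorem \ref{nHJT} is exactly the chain-rule expansion of $dE$ along the section combined with the closure identities for $\gamma$ that you carry out blockwise for $(q_{(0)},a_{(0)},m_{(0)})$. Your write-up in fact supplies more detail (the explicit three-block local system and the converse direction) than the paper records.
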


The second condition reads the implicit HJ equation
$$E=\gamma_A {q}_{(1)}^A+\alpha_A {a}_{(1)}^A+\beta_A {m}_{(1)}^A-L_3=c,$$
where $c$ being a constant. Taking the exterior derivative of this equation, we arrive at the following local picture of the Hamilton-Jacobi equation  \begin{eqnarray}\label{HJ-L-3}
 \begin{split}
 \frac{\partial \gamma_A}{\partial{q}_{(0)}^B} {q}_{(1)}^A 
+\frac{\partial \alpha_A}{\partial{q}_{(0)}^B} {a}_{(1)}^A
+ \frac{\partial \beta_A}{\partial{q}_{(0)}^B} {m}_{(1)}^A-\frac{\partial L_3}{\partial{q}_{(0)}^B}=0 
\\
\frac{\partial \gamma_A}{\partial{a}_{(0)}^B} {q}_{(1)}^A 
+\frac{\partial\alpha_A }{\partial{a}_{(0)}^B} {a}_{(1)}^A
+ \frac{\partial \beta_A}{\partial{a}_{(0)}^B} {m}_{(1)}^A-\frac{\partial L_3}{\partial{a}_{(0)}^B}=0 
 \\
\frac{\partial \gamma_A}{\partial{m}_{(0)}^B} {q}_{(1)}^A 
+\frac{\partial\alpha_A}{\partial{m}_{(0)}^B} {a}_{(1)}^A
+ \frac{\partial \beta_A}{\partial{m}_{(0)}^B} {m}_{(1)}^A-\frac{\partial L_3}{\partial{m}_{(0)}^B}=0
 \\
\gamma_A-\frac{\partial L_3}{\partial{q}_{(1)}^A}=0, 
 \\
\alpha_A-\frac{\partial L_3}{\partial{a}_{(1)}^A}=0, 
  \\
\beta_A-\frac{\partial L_3}{\partial{m}_{(1)}^A}=0, 
 \end{split}
\end{eqnarray}  
 where $L_3$ is the Lagrangian function in (\ref{L3}). As a particular case, we consider that the auxiliary function is taken to be $F=\delta_{AB} {q}_{(1)}^A {m}_{(0)}^B$. In this case the Lagrangian function $L_3$ reduces to 
 \begin{equation}
L_{3}\left( {q}_{(0)}, {a}_{(0)};{q}_{(1)}, {a}_{(1)};{m}_{(0)},{m}_{(1)}\right ) =L\left({q}_{(0)}, {a}_{(0)},{q}_{(1)},{a}_{(1)} \right)
+\delta_{AB} {q}_{(1)}^A {m}_{(0)}^B. \label{red-L3}
\end{equation}
In this case, the last equation in system \eqref{HJ-L-3} provides the definition of the Lagrange multiplier as $q_{(1)}^{A}=\delta^{AC}\beta_C$. So that the substitution of the Lagrangian \eqref{red-L3} into \eqref{HJ-L-3}, we the following reduced Hamilton-Jacobi equations
 \begin{eqnarray} \label{HJ-L-3-red}
  \begin{split}
  \delta^{AC}\gamma_C\frac{\partial \gamma_A}{\partial{q}_{(0)}^B} 
+\frac{\partial\alpha_A}{\partial{q}_{(0)}^B} {a}_{(1)}^A
+ \frac{\partial \beta_A}{\partial{q}_{(0)}^B} {m}_{(1)}^A-\frac{\partial L}{\partial{q}_{(0)}^B}\bigg\rvert_{{q_{(1)}^A}=\delta^{AB}\beta_B}=0 
\\
\delta^{AC}\gamma_C \frac{\partial \gamma_A}{\partial{a}_{(0)}^B} 
+\frac{\partial\alpha_A}{\partial{a}_{(0)}^B} {a}_{(1)}^A
+ \frac{\partial \beta_A}{\partial{a}_{(0)}^B} {m}_{(1)}^A-\frac{\partial L}{\partial{a}_{(0)}^B}\bigg\rvert_{{q_{(1)}^A}=\delta^{AB}\beta_B}=0 
 \\
\delta^{AC}\gamma_C\frac{\partial \gamma_A}{\partial{m}_{(0)}^B} 
+\frac{\partial\alpha_A}{\partial{m}_{(0)}^B} {a}_{(1)}^A
+ \frac{\partial \beta_A}{\partial{m}_{(0)}^B} {m}_{(1)}^A-\frac{\partial L}{\partial{m}_{(0)}^B}\bigg\rvert_{{q_{(1)}^A}=\delta^{AB}\beta_B}=0
 \\
\gamma_A-\frac{\partial L}{\partial{q}_{(1)}^A}\bigg\rvert_{{q_{(1)}^A}=\delta^{AB}\beta_B}-\delta_{AB}m_{(1)}^B=0,
 \\
\alpha_A-\frac{\partial L}{\partial{a}_{(1)}^A}\bigg\rvert_{{q_{(1)}^A}=\delta^{AB}\beta_B}=0, 
 \end{split}
 \end{eqnarray}

\bigskip

\noindent  \textbf{{Hamilton-Jacobi Theory for degenerate second order Lagrangians.}}
Notice that up to now, the non-degeneracy condition has not been assumed. This implies that we can apply this framework in both degenerate and nondegenerate third order Lagrangian systems. It is also interesting to note that we can further study the second order Lagrangian systems in the present framework. Let us study this particular case. In the definition of $L_3$ given in (\ref{L3}), we choose $L=L(   {q}_{(0)}, {a}_{(0)},{q}_{(1)})$, and consider an auxiliary function $F=F({ {q}_{(0)},q_{(1)}, {m}_{(0)}}) $. So that, we have a Lagrangian function
\begin{equation}
L_{2-deg}\left(   {q}_{(0)}, {a}_{(0)};{q}_{(1)}, {a}_{(1)};{m}_{(0)},{m}_{(1)}\right) = L \left(   {q} _{(0)} ,{q_{(1)}}, {a}_{(0)}   \right) +
\frac{\partial F}{%
\partial q_{(0)}^A }{q_{(1)}^A }+\frac{\partial F}{\partial q_{(1)}^A }{a_{(0)}^A}
+\frac{\partial F}{%
\partial m_{(0)}^A }{m_{(1)}^A }
\label{L-2-deg}
\end{equation}
defined on the tangent bundle $T(AQ\times M)$. In this case, the energy function (\ref{totHam--}) is reduced to 
\begin{equation}
E= p^{(0)}_A {q}_{(1)}^A+\pi^{(0)}_A{a}_{(1)}^A+\mu^{(0)}_A {m}_{(1)}^A-L-\frac{\partial F}{%
\partial q_{(0)}^A }{q_{(1)}^A }-\frac{\partial F}{\partial q_{(1)}^A }{a_{(0)}^A}-
\frac{\partial F}{\partial m_{(0)}^A }{m_{(1)}^A}. 
\label{totHam----}
\end{equation}
This Morse family generates a nonhorizontal Lagrangian submanifold of $TT^*(AQ\times M)$. So that defines an implicit Hamiltonian system. We substitute the Lagrangian $L_{2-deg}$ into the Hamilton-Jacobi equation (\ref{HJ-L-3}). This gives the following Hamilton-Jacobi equation for a second order degenerate Lagrangian $L$. The fifth equation gives us that $\alpha_A=0$. Under the light of the closure of the differential form $\gamma$, this reads 
$\gamma_{A}=\gamma_{A}({q_{(0)},m_{(0)}})$ and that $\beta_{A}=\beta_{A}({q_{(0)},m_{(0)}})$ so we have
\begin{equation} \label{newHJ-L-2-deg-22}
\begin{split}
&\frac{\partial \gamma_A}{\partial{q}_{(0)}^B} {q}_{(1)}^A 
+ \frac{\partial \beta_A}{\partial{q}_{(0)}^B} {m}_{(1)}^A=\frac{\partial L}{\partial{q}_{(0)}^B} + \frac{\partial^2 F}{\partial{q}_{(0)}^B \partial{q}_{(0)}^A} {q_{(1)}^A}
+\frac{\partial^2 F}{\partial{q}_{(0)}^B \partial{q}_{(1)}^A} {a_{(0)}^A} 
+\frac{\partial^2 F}{\partial{q}_{(0)}^B \partial{m}_{(0)}^A} {m_{(1)}^A}
,\\&\frac{\partial L}{\partial{a}_{(0)}^A}+ \frac{\partial F}{\partial{q}_{(1)}^A } =0 
 \\&
 \frac{\partial \gamma_A}{\partial{m}_{(0)}^B} {q}_{(1)}^A 
+ \frac{\partial \beta_A}{\partial{m}_{(0)}^B} {m}_{(1)}^A=
\frac{\partial^2 F}{\partial{m}_{(0)}^B \partial{q}_{(0)}^A} {q_{(1)}^A}
+ \frac{\partial^2 F}{\partial{m}_{(0)}^B \partial{q}_{(1)}^A} {a_{(0)}^A}
+\frac{\partial^2 F}{\partial{m}_{(0)}^B \partial{m}_{(0)}^A} {m_{(1)}^A} 
 \\&
 \gamma_B=\frac{\partial L}{\partial{q}_{(1)}^B}+
 \frac{\partial^2 F}{\partial{q}_{(1)}^B \partial{q}_{(0)}^A} {q_{(1)}^A}
 + \frac{\partial F}{\partial{q}_{(0)}^B}
+ \frac{\partial^2 F}{\partial{q}_{(1)}^B \partial{q}_{(1)}^A} {a_{(0)}^A}
+ \frac{\partial^2 F}{\partial{q}_{(1)}^B \partial{m}_{(0)}^A} {m_{(1)}^A} 
 \\&
  \beta_A=\frac{\partial F}{\partial{m}_{(0)}^A}. 
\end{split}
 \end{equation} 

Let us study the Hamilton-Jacobi equation (\ref{newHJ-L-2-deg-22}) for the particular choice of $F=\delta_{AB} {q}_{(1)}^A {m}_{(0)}^B$. As in the third order case, the last line of the system implies that $q_{(1)}^A=\delta^{AC}\beta_{C}$. Eventually we have  
\begin{equation} \label{HJ-L-2-deg}
\begin{split}
&\frac{\partial \delta^{AC}\beta_{C}\gamma_A}{\partial{q}_{(0)}^B}
+ \frac{\partial \beta_A}{\partial{q}_{(0)}^B} {m}_{(1)}^A=\frac{\partial L}{\partial{q}_{(0)}^B} \\
&\frac{\partial L}{\partial{a}_{(0)}^A}+ \delta_ {AB}m_{(0)}^B =0 
\\
&\delta^{AC}\beta_C \frac{\partial \gamma_A}{\partial{m}_{(0)}^B}  
+ \frac{\partial\beta_A}{\partial{m}_{(0)}^B} {m}_{(1)}^A=\delta_{BC}a_{(0)}^C 
 \\ &\gamma_B=\frac{\partial L}{\partial{q}_{(1)}^B}+\delta_{AB} {m_{(1)}^A}.
\end{split}
 \end{equation} 
 \subsection{Local vector field method}
 
The second procedure to deal with an implicit higher-order implicit Lagrangian is based on the construction of a local vector field describing the dynamics. Consider an additional section $\sigma:T^{*}T^{k-1}Q\rightarrow TT^{*}T^{k-1}Q$ in the same previous picture.

\begin{center}
 \begin{tikzcd}[column sep=tiny,row sep=huge]
  & &  &S\arrow[r, hook, "i"] &TT^*T^{k-1}Q \arrow[ld, "\tau_{T^*T^{k-1}Q}"] \arrow[rd, "T_{\pi_{T^{k-1}Q}}"] & & \\
  & &C\cap Im(\gamma)\arrow[r, hook', "i"'] &T^*T^{k-1}Q\arrow[dr,"\pi_{T^{k-1}Q}"]\arrow[ur,"\sigma",bend left=20]& &TT^{k-1}Q\arrow[dl,"\tau_{T^{k-1}Q}"]\arrow[dr,"T\tau^{k-1}_Q"]  &&S^{\gamma}\arrow[ll, hook', "i"']\\
& & & &T^{k-1}Q\arrow[d,"\tau_Q^{k-1}"]\arrow[ul,"\gamma",bend left=20]
 & &TQ\arrow[dll,"\tau_Q"'] &\\
 & & & &Q & &\mathbb{R}\arrow[ll,"\phi"']\arrow[u,"j^1\phi"']  &\\
 \end{tikzcd}
\end{center}
where $\tau_Q^{k-1}:T^{k-1}Q\rightarrow Q$.

{\it Remark:} Recall that $E$ is implicit, so there are several vectors in $E$ projecting to the same point. The role of $\sigma$ is to reduce the unknown number to one. We require that the domain of the section is included in the intersection of Im$(\gamma)$ and $C$. Since for implicit systems $C$ may not be the whole $T^{*}T^{k-1}Q$, as a result we arrive at a vector field $X_{\sigma}$ that will satisfy a Hamilton equation of type
\begin{equation}
\iota_{X_{\sigma}}\Omega_{T^{k-1}Q}=\Theta(\gamma(q))
\end{equation}
for an arbitrary covector $\Theta$ defined at a point $\gamma(q)$.

 The construction of these local vector field using $\sigma$ would imply the following diagram

\begin{equation}\label{Xg}
  \xymatrix{ T^{*}T^{k-1}Q
\ar[dd]^{\pi_{T^{k-1}Q}} \ar[rrr]^{X_{\sigma}}&   & &TT^{*}T^{k-1}Q\ar[dd]^{T_{\pi_{T^{k-1}Q}}} &S\ar[l]^i\\
  &  & & &\\
T^{k-1}Q\ar@/^2pc/[uu]^{\gamma}\ar[rrr]^{X_{\sigma}^{\gamma}} &  & & TT^{k-1}Q &S^{\gamma}\ar[l]^i}
\end{equation}

\noindent
Explicitly, the locally constructed vector fields $X_{\sigma}\in \mathfrak{X}(T^{*}T^{k-1})Q$ and $X_{\sigma}^{\gamma}\in \mathfrak{X}(T^{k-1}Q)$ in coordinates would read:

\begin{equation}\label{coordfields}
X_{\sigma}=\sigma_{(\kappa)}^A(q_{(\kappa)}, \gamma^{(\kappa)})\frac{\partial }{\partial q_{(\kappa)}^A}+ \sigma^{(\kappa)}_A(q_{(\kappa)}, \gamma^{(\kappa)})\frac{\partial }{\partial p^{(\kappa)}_A}, \quad X_{\sigma}^{\gamma}=\sigma_{(\kappa)}^A(q_{(\kappa)}, \gamma^{(\kappa)})\frac{\partial }{\partial q_{(\kappa)}^A}
\end{equation}

If we use the one-form $\gamma:T^{k-1}Q\rightarrow T^{*}T^{k-1}Q$ and define the projected vector field
\begin{equation}
X_{\sigma}^{\gamma}=T_{\pi_{T^{k-1}Q}}\circ X_{\sigma}\circ \gamma,
\end{equation}
we have the following theorem.
\begin{theorem}[Implicit HJ theorem with an auxiliary section]
The one-form $\gamma$ will be a solution of an implicit higher order Hamilton--Jacobi problem if it satisfies the following relation
\begin{equation}
\sigma_{(\kappa)}^A(q_{(\kappa)}^A, \gamma^{(\kappa)}_A(q_{(\kappa)}))\frac{\partial \gamma^{(\kappa)}_A }{\partial q_{(\kappa)}^A}=\sigma^{(\kappa)}_A(q_{(\kappa)}^A, \gamma^{(\kappa)}_A(q_{(\kappa)})),
\end{equation}
when $\sigma$ is an auxiliary section $\sigma:T^{*}T^{k-1}Q\rightarrow TT^{*}T^{k-1}Q$. It is fulfilled that $\sigma^{-1}(S)=C$. Recall that since $S$ is an implicit submanifold, it does not necessarily project on the whole $T^{*}T^{k-1}Q$, but in a submanifold $C$ of it.
\end{theorem}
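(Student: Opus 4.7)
The plan is to unpack the phrase ``$\gamma$ is a solution of the implicit HJ problem'' into the requirement that the local vector fields $X_\sigma$ and $X_\sigma^\gamma$ of \eqref{coordfields} be $\gamma$-related, and then translate this intrinsic relation into coordinates using the local expression of $\gamma$ and of its tangent lift $T\gamma$. This mimics the logic used in Theorem~\ref{HJT} and in the Morse family version Theorem~\ref{nHJT}, but replaces the Hamiltonian vector field (which need not exist, since the system is implicit) by the locally constructed $X_\sigma$ whose existence is guaranteed by the auxiliary section $\sigma$ with $\sigma^{-1}(S)=C$.

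First I would record the following intrinsic reformulation: $\gamma$ solves the HJ problem whenever, for every integral curve $\psi\colon\mathbb{R}\to T^{k-1}Q$ of $X_\sigma^\gamma$, the lifted curve $\gamma\circ\psi$ is an integral curve of $X_\sigma$ on $T^{\ast}T^{k-1}Q$. A standard argument (identical to the explicit case reviewed just above Theorem~\ref{nHJT}) shows that this is equivalent to the pointwise identity
\begin{equation*}
T\gamma\bigl(X_\sigma^\gamma(q)\bigr)=X_\sigma\bigl(\gamma(q)\bigr),\qquad q\in T^{k-1}Q,
\end{equation*}
that is, $X_\sigma$ and $X_\sigma^\gamma$ are $\gamma$-related. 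The condition $\sigma^{-1}(S)=C$ enters precisely here: it guarantees that $X_\sigma\circ\gamma$ actually takes values in $S\subset TT^{\ast}T^{k-1}Q$, so that both sides of the equality live in the same target, and that the projected field $X_\sigma^\gamma$ lands in $S^{\gamma}$.

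Next I would compute $T\gamma$ in local coordinates. Writing $\gamma(q_{(\kappa)})=\gamma^{(\kappa)}_A(q)\,dq^{A}_{(\kappa)}$ as a section of $\pi_{T^{k-1}Q}$, the tangent map applied to a vector $v=v^{B}_{(\hat\kappa)}\,\partial/\partial q^{B}_{(\hat\kappa)}$ reads
\begin{equation*}
T\gamma(v)=v^{B}_{(\hat\kappa)}\frac{\partial}{\partial q^{B}_{(\hat\kappa)}}+v^{B}_{(\hat\kappa)}\frac{\partial \gamma^{(\kappa)}_A}{\partial q^{B}_{(\hat\kappa)}}\frac{\partial}{\partial p^{(\kappa)}_A}.
\end{equation*}
Taking $v=X_\sigma^\gamma(q)$ from \eqref{coordfields} and comparing with $X_\sigma(\gamma(q))$ from the same display, the horizontal components agree tautologically by the construction $X_\sigma^\gamma=T\pi_{T^{k-1}Q}\circ X_\sigma\circ\gamma$, so the content of the $\gamma$-relatedness collapses to matching the vertical (momentum) components. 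This matching is exactly
\begin{equation*}
\sigma^{B}_{(\hat\kappa)}\bigl(q,\gamma(q)\bigr)\,\frac{\partial \gamma^{(\kappa)}_A}{\partial q^{B}_{(\hat\kappa)}}(q)=\sigma^{(\kappa)}_A\bigl(q,\gamma(q)\bigr),
\end{equation*}
which coincides with the identity in the theorem once one reads the author's condensed index notation as hiding the sum over $(B,\hat\kappa)$.

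I expect the only subtle step to be the verification that the implicit character of $S$ does not obstruct the chain of equivalences. Concretely, one has to check that the local field $X_\sigma$ is well-defined on the relevant domain $C\cap\operatorname{Im}(\gamma)$ and that restricting to this set is compatible with integrating along $\psi$; this is where the hypothesis $\sigma^{-1}(S)=C$ is indispensable. Once this is in place, the equivalence between the intrinsic $\gamma$-relatedness and the coordinate identity is purely algebraic, following the pattern of the standard HJ theorem \eqref{HJT} but with $\sigma$ playing the role that a globally defined Hamiltonian vector field plays in the explicit case.
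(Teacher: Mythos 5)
Your proposal is correct and follows essentially the same route as the paper's (very terse) proof: the authors simply invoke the $\gamma$-relatedness condition $T\gamma(X_{\sigma}^{\gamma})=X_{\sigma}\circ\gamma$ together with the coordinate expressions \eqref{coordfields}, which is exactly the computation you carry out in detail. Your version merely fills in the explicit form of $T\gamma$ and the observation that the horizontal components match tautologically, so the content reduces to equating the momentum components.
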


\begin{proof}
It is straightforward using that
\begin{equation}
T_{\gamma}(X_{\sigma}^{\gamma})=X_{\sigma}\circ \gamma
\end{equation}
and the expressions of $X_{\sigma}^{\gamma}$ and $X_{\sigma}$ in coordinates as in \eqref{coordfields}.
\end{proof}

\section{Applications}

\subsection{A (homogeneous) deformed elastic cylindrical beam with fixed ends}

Let $Q$ be a one-dimensional manifold with coordinate $q_{(0)}$, and introduce the second order Lagrangian
\begin{equation} \label{Lag-ex-1}
L(q_{(0)},q_{(1)},q_{(2)})=\frac{1}{2}\mu q_{(2)}^2+\rho q_{(0)}
\end{equation}
in terms of a local coordinate system $(q_{(0)},q_{(1)},q_{(2)})$ on $T^2Q$. 

\bigskip

\noindent \textbf{The Morse family method - Ostrogradsky Momenta}. 
We will first apply the Ostrogradsky method. In this method, the corresponding energy function is computed to be
\begin{equation}\label{morsefamily1}
E(q_{(0)},q_{(1)},q_{(2)},p^{(0)},p^{(1)})=p^{(0)}q_{(1)}+p^{(1)}q_{(2)}-\frac{1}{2}\mu q_{(2)}^2-\rho q_{(0)},
\end{equation}
where $q_{(2)}\in \mathbb{R}$ is the fiber component and $(q_{(0)},q_{(1)},p^{(0)},p^{(1)})$ are the canonical coordinates on $T^*TQ$. Here, $q_{(2)}$ is a Lagrange multiplier. The Morse family $E$ generates a Lagrangian submanifold $S$ of $TT^*TQ$, that corresponds with 
$$
S=\{(q_{(0)},q_{(1)},p^{(0)},\mu q_{(2)}; q_{(1)},q_{(2)},\rho,-p^{(0)})\in TT^*TQ:q_{(2)}\in \mathbb{R}\}.
$$
This Lagrangian submanifold defines the following differential equation
\begin{equation}\label{fourthorder}
 \ddddot{q_{(0)}}=-\frac{\rho}{\mu},
 \end{equation}
 which is exactly the second order Euler-Lagrange equation generated by the Lagrangian function $L$.  
The projection of $S$ onto the cotangent bundle $T^*TQ$ results in the submanifold
$$
C=\{(q_{(0)},q_{(1)};p^{(0)},p^{(1)})\in T^*TQ: p^{(1)}=\mu q_{(2)}\in \mathbb{R}\}.
$$
Let us now consider a closed one-form $\gamma=\gamma^{(0)} dq_{(0)}+\gamma^{(1)} dq_{(1)}$ and write the Hamilton-Jacobi equations (\ref{HJfor2ndOr}).
\begin{eqnarray}\label{HJfor2ndOr}
\begin{cases}
q_{(1)}\frac{\partial \gamma^{(0)}}{\partial q_{(0)}}+q_{(2)} \frac{\partial \gamma^{(1)}}{\partial q_{(0)}}-\rho=0 \\
\gamma^{(0)}+q_{(1)}\frac{\partial \gamma^{(0)}}{\partial q_{(1)}}+q_{(2)} \frac{\partial \gamma^{(1)}}{\partial q_{(1)}}=0 \\
\gamma^{(1)}-\mu q_{(2)}=0.
\end{cases}
\end{eqnarray}
If we substitute the last equation into the Morse family \eqref{morsefamily1} equal to constant, and we assume that $\gamma=dW$ for some real valued function $W$ on $TQ$, we arrive at that
\begin{eqnarray}\label{HJfor2ndOr-}
q_{(1)}\frac{\partial W}{\partial q_{(0)}}+\frac{1}{2\mu} \left(\frac{\partial W}{\partial q_{(1)}}\right)^2-\rho q_{(0)}=0.
\end{eqnarray}
Note that, in this case, solving the Hamilton-Jacobi equation is much more difficult than solving \eqref{fourthorder}.

\bigskip

\noindent \textbf{The Morse family method - The Schmidt's method}. Let us now propose the Schmidt method (\ref{S-2-HJ}). In this case, we have a two-dimensional acceleration bundle $AQ$ with coordinates $(q_{(0)},a_{(0)})$. Its tangent bundle $TAQ$ is four-dimensional with coordinates $(q_{(0)},a_{(0)};q_{(1)},a_{(1)})$. We pull back the Lagrangian in (\ref{Lag-ex-1}) to $TAQ$ by means of the isomorphism (\ref{iso}) which reads that 
$$ L=\frac{1}{2}\mu a_{(0)}^2+\rho q_{(0)}.$$ 
The compatibility condition (\ref{chi}) and the non-degeneracy of the Lagrangian suggests the auxiliary function $F=-\mu a_{(0)}q_{(1)}$. So, the extended Lagrangian (\ref{L2}) turns out to be
$$ L_2=\rho q_{(0)}-\frac{1}{2}\mu a_{(0)}^2-\mu a_{(1)}q_{(1)}.$$ 
The dual coordinates on the cotangent bundle $T^*AQ$ is given by $(q_{(0)},a_{(0)};p^{(0)},\pi^{(0)})$. The conjugate momenta is computed to be $\pi^{(0)}=-\mu q_{(1)}$. According to (\ref{canH}), this results with the following Hamiltonian function
$$ H=\frac{1}{2}\mu a_{(0)}^2-\frac{1}{\mu}\pi^{(0)} p^{(0)}-\rho q_{(0)}.$$ 
To arrive at the Hamilton-Jacobi equation, assume a closed one-form $\gamma=\gamma d q_{(0)} + \alpha d a_{(0)}$ defined on the acceleration bundle $AQ$, that is $\frac{\partial \gamma}{\partial a_{(0)}}=\frac{\partial \alpha}{\partial q_{(0)}}$. Recalling the Hamilton-Jacobi theorem asserts that the restriction of $H$ on $\gamma$ is constant \eqref{HJfor2ndOr}. Taking exterior derivative of this, we have the following set of equations
\begin{eqnarray}
\frac{\partial \alpha}{\partial a_{(0)}}\alpha +\frac{\partial \alpha}{\partial a_{(0)}}\gamma = \mu^2 a_{(0)}, \\
 \frac{\partial \gamma}{\partial q_{(0)}} \alpha +\frac{\partial \alpha}{\partial q_{(0)}} \gamma=- 2\rho.
\end{eqnarray}
Note that, this Hamilton-Jacobi problem reduces to the one studied in  (\ref{example}) if $\rho=0$. In this case, we solve the system as $\gamma=c_2$ and $\alpha=\frac{1}{c_2}\mu^2  a_{(0)} + c$, where $c$ and $c_2$ are constants.

\subsection{One dimensional version of the end of a javelin}
Let us consider the following Lagrangian on $T^2Q$ of the one-dimensional manifold $Q$ equipped with $(q_{(0)},q_{(1)},q_{(2)})$ given by
\begin{equation} \label{Lag-ex-2}
L(q_{(0)},q_{(1)},q_{(2)})=\frac{1}{2} q_{(1)}^2-\frac{1}{2} q_{(2)}^2.
\end{equation}

\bigskip

\noindent \textbf{The Morse family method - Ostrogradsky Momenta}. 
The associated energy function is given by
$$E(q_{(0)},q_{(1)},q_{(2)},p^{(0)},p^{(1)})=p^{(0)}q_{(1)}+p^{(1)}q_{(2)}+\frac{1}{2} q_{(2)}^2-\frac{1}{2} q_{(1)}^2.$$
Here, $q_{(2)}\in \mathbb{R}$ is the fiber component and $(q_{(0)},q_{(1)},p^{(0)},p^{(1)})$ are the canonical coordinates on $T^*TQ$. The Morse function $E$ generates the Lagrangian submanifold of $TT^*TQ$ given by
$$
S=\{(q_{(0)},q_{(1)},p^{(0)},-q_{(2)}; q_{(1)},q_{(2)},0,p^{(0)}-q_{(1)})\in TT^*TQ:q_{(2)}\in \mathbb{R}\}.
$$
This Lagrangian submanifold defines the equations
$$ \dddot{q_{(0)}}+\ddot{q_{(0)}}=c,$$
where $c$ is a constant. The projection of $S$ onto the cotangent bundle $T^*TQ$ is a three dimensional manifold
$$
C=\{(q_{(0)},q_{(1)};p^{(0)},p^{(1)})\in T^*TQ: p^{(1)}=- q_{(2)}\in \mathbb{R}\}.
$$
for a fixed $q_{(2)}$. For a closed one-form $\gamma^{(0)} dq_{(0)}+\gamma^{(1)} dq_{(1)}$, the Hamilton-Jacobi equation according to Theorem \eqref{HJfor2ndOr} turns out to be
\begin{eqnarray} \label{gamma-ex}
\begin{cases}
q_{(1)} \frac{\partial \gamma^{(0)}}{\partial q_{(0)}}+q_{(2)} \frac{\partial \gamma^{(1)}}{\partial q_{(0)}}=0
\\
\gamma^{(0)}+q_{(1)} \frac{\partial \gamma^{(0)}}{\partial q_{(1)}}+q_{(2)} \frac{\partial \gamma^{(1)}}{\partial q_{(1)}}-q_{(1)}=0
\\
\gamma^{(1)}+ q_{(2)}=0.
\end{cases}
\end{eqnarray}
We can solve $q_{(2)}$ from the last equation and if we substitute it in the equation $E=cst$, under the image of $\gamma=dW$ for some real valued function $W$ on $TQ$, we arrive at
\begin{eqnarray}\label{HJfor2ndOr-}
\frac{\partial W}{\partial q_{(0)}} q_{(1)} -\frac{1}{2} \left(\frac{\partial W}{\partial q_{(1)} }\right)^2 -\frac{1}{2} q_{(1)}^2=0.
\end{eqnarray}
There is a solution \cite{Co84}
$$ W(q_{(0)},q_{(1)})=Aq_{(0)}+\sqrt{2}\int\sqrt{Aq_{(1)}-\frac{1}{2}q_{(1)}^2-B}dq_{(1)}$$
which results with a one-form $\gamma$ solving the system (\ref{gamma-ex}) in form 
$$ \gamma=A dq_{(0)}+   \sqrt{2} \sqrt{\left(Aq_{(1)}-\frac{1}{2}q_{(1)}^2-B\right)}d q_{(1)}.$$

\bigskip

\noindent \textbf{The Morse family method - The Schmidt's method}. As an alternative realization of the Hamilton-Jacobi problem, we can use the Schmidt method in (\ref{S-2-HJ}). As in the previous subsection, we assume that acceleration bundle $AQ$ is two-dimensional with local coordinates $(q_{(0)},a_{(0)})$, and $TAQ$ is a four-dimensional manifold with $(q_{(0)},a_{(0)};q_{(1)},a_{(1)})$. We pull back the Lagrangian $L$ in (\ref{Lag-ex-2}) by means of the isomorphism \eqref{iso} and arrive at that
\begin{equation}
L=\frac{1}{2} q_{(1)}^2-\frac{1}{2} a_{(0)}^2.
\end{equation}
In this case, the auxiliary function is taken to be $F= a_{(0)}q_{(1)}$. Note that, $F$ satisfies the compatibility condition in (\ref{chi}). So, the first order Lagrangian function (\ref{L2}) is computed to be
$$ L_2= \frac{1}{2} q_{(1)}^2 + \frac{1}{2} a_{(0)}^2 + q_{(1)} a_{(1)}.$$ 
The coordinates on the cotangent bundle $T^*AQ$ are $(q_{(0)},a_{(0)};p^{(0)},\pi^{(0)})$ and the conjugate momenta is computed to be $\pi^{(0)}= q_{(1)}$. This results in the following Hamiltonian function
$$ H=\pi^{(0)} p^{(0)}-\frac{1}{2} (\pi^{(0)})^2 - \frac{1}{2} a_{(0)}^2$$ 
The Hamilton-Jacobi theorem in the acceleration bundle \eqref{HJacc} asserts that the restriction of $H$ on a closed one-form $dW$ is constant, say $c$. See that this can be written as 
$$
\frac{ \partial W}{\partial a_{(0)}}\frac{ \partial W}{\partial q_{(0)}}
- \left( \frac{ \partial W}{\partial a_{(0)}}\right)^2 - \frac{1}{2} a_{(0)}^2=c.
$$ 
A solution of this equation can easily be computed to be 
$$W=\frac{1}{\sqrt{2}}ln\left(a_{(0)}+\sqrt{a_{(0)}^2+2c} \right)
+\frac{1}{2\sqrt{2}}a_{(0)}\sqrt{a_{(0)}^2+2c}.$$ 

\subsection{A simple degenerate model}

Now we consider $Q$ as a three dimensional manifold with coordinates $(x,y,z)$ and consider the following degenerate second order Lagrangian
\begin{equation}
L=\frac{1}{2}(\ddot{x}+\ddot{y})^2.
\end{equation}  

\bigskip

\noindent \textbf{The Morse family method - Ostrogradsky Momenta}. On the cotangent bundle $T^*TQ$, we introduce the momenta $(p_x,p_y,p_z;p_{\dot{x}},p_{\dot{y}},p_{\dot{z}})$ and the energy function
\begin{equation}
E=p_x\dot{x}+p_y\dot{y}+p_z\dot{z}+p_{\dot{x}}\ddot{x}+p_{\dot{y}}\ddot{y}+p_{\dot{z}}\ddot{z}-\frac{1}{2}(\ddot{x}+\ddot{y})^2.
\end{equation}
Assume a function $W$ depending on $(x,y,z;\dot{x},\dot{y},\dot{z})$, then the Hamilton-Jacobi problem (\ref{HJfor2ndOr2--}) read 
\begin{equation}
\begin{split}
\frac{\partial ^2 W}{\partial x \partial x} \dot{x}
+
\frac{\partial ^2 W}{\partial x \partial y} \dot{y}
+
\frac{\partial ^2 W}{\partial x \partial z} \dot{z}
+
\frac{\partial ^2 W}{\partial x \partial \dot{x}} \ddot{x}
+
\frac{\partial ^2 W}{\partial x \partial \dot{y}} \ddot{y}
+
\frac{\partial ^2 W}{\partial x \partial \dot{z}} \ddot{z}
=0
\\
\frac{\partial W}{\partial x }
+
\frac{\partial ^2 W}{\partial \dot{x} \partial x} \dot{x}
+
\frac{\partial ^2 W}{\partial \dot{x} \partial y} \dot{y}
+
\frac{\partial ^2 W}{\partial \dot{x} \partial z} \dot{z}
+
\frac{\partial ^2 W}{\partial \dot{x} \partial \dot{x}} \ddot{x}
+
\frac{\partial ^2 W}{\partial \dot{x} \partial \dot{y}} \ddot{y}
+
\frac{\partial ^2 W}{\partial \dot{x} \partial \dot{z}} \ddot{z}
=0
\\
\frac{\partial ^2 W}{\partial y \partial x} \dot{x}
+
\frac{\partial ^2 W}{\partial y \partial y} \dot{y}
+
\frac{\partial ^2 W}{\partial y \partial z} \dot{z}
+
\frac{\partial ^2 W}{\partial y \partial \dot{x}} \ddot{x}
+
\frac{\partial ^2 W}{\partial y \partial \dot{y}} \ddot{y}
+
\frac{\partial ^2 W}{\partial y \partial \dot{z}} \ddot{z}
=0
\\
\frac{\partial W}{\partial y }
+
\frac{\partial ^2 W}{\partial \dot{y} \partial x} \dot{x}
+
\frac{\partial ^2 W}{\partial \dot{y} \partial y} \dot{y}
+
\frac{\partial ^2 W}{\partial \dot{y} \partial z} \dot{z}
+
\frac{\partial ^2 W}{\partial \dot{y} \partial \dot{x}} \ddot{x}
+
\frac{\partial ^2 W}{\partial \dot{y} \partial \dot{y}} \ddot{y}
+
\frac{\partial ^2 W}{\partial \dot{y} \partial \dot{z}} \ddot{z}
=0
\\
\frac{\partial ^2 W}{\partial z \partial x} \dot{x}
+
\frac{\partial ^2 W}{\partial z \partial y} \dot{y}
+
\frac{\partial ^2 W}{\partial z \partial z} \dot{z}
+
\frac{\partial ^2 W}{\partial z \partial \dot{x}} \ddot{x}
+
\frac{\partial ^2 W}{\partial z \partial \dot{y}} \ddot{y}
+
\frac{\partial ^2 W}{\partial z \partial \dot{z}} \ddot{z}
=0
\\
\frac{\partial W}{\partial z }
+
\frac{\partial ^2 W}{\partial \dot{z} \partial x} \dot{x}
+
\frac{\partial ^2 W}{\partial \dot{z} \partial y} \dot{y}
+
\frac{\partial ^2 W}{\partial \dot{z} \partial z} \dot{z}
+
\frac{\partial ^2 W}{\partial \dot{z} \partial \dot{x}} \ddot{x}
+
\frac{\partial ^2 W}{\partial \dot{z} \partial \dot{y}} \ddot{y}
+
\frac{\partial ^2 W}{\partial \dot{z} \partial \dot{z}} \ddot{z}
=0.
\end{split}
\end{equation} 
Although this system looks cumbersome, the set of constraints in (\ref{HJfor2ndOr2--}) is simply computed as  
\begin{equation}
\frac{\partial W}{\partial \dot{x}}=\frac{\partial W}{\partial \dot{y}}=\ddot{x} + \ddot{y}, \qquad \frac{\partial W}{\partial \dot{z}}=0,
\end{equation}
what reduces this huge system to a more reasonable one. For example, the independece of $W$ to $\dot{z}$ from the last line of the system gives the independence of $W$ to $z$. So that we have actually 4 number of equations. Then the first two constraints lead to the following reduced system of equations
\begin{equation}
\begin{split}
\frac{\partial ^2 W}{\partial x \partial x} \dot{x}
+
\frac{\partial ^2 W}{\partial x \partial y} \dot{y}
+
\frac{\partial ^2 W}{\partial x \partial \dot{x}} \frac{\partial W}{\partial \dot{x}}
=0
\\
\frac{\partial W}{\partial x }
+
\frac{\partial ^2 W}{\partial \dot{x} \partial x} \dot{x}
+
\frac{\partial ^2 W}{\partial \dot{x} \partial y} \dot{y}
+
\frac{\partial ^2 W}{\partial \dot{x} \partial \dot{x}} \frac{\partial W}{\partial \dot{x}}
=0
\\
\frac{\partial ^2 W}{\partial y \partial x} \dot{x}
+
\frac{\partial ^2 W}{\partial y \partial y} \dot{y}
+
\frac{\partial ^2 W}{\partial y \partial \dot{x}} \frac{\partial W}{\partial \dot{x}}
=0
\\
\frac{\partial W}{\partial y }
+
\frac{\partial ^2 W}{\partial \dot{y} \partial x} \dot{x}
+
\frac{\partial ^2 W}{\partial \dot{y} \partial y} \dot{y}
+
\frac{\partial ^2 W}{\partial \dot{y} \partial \dot{x}} \frac{\partial W}{\partial \dot{x}}
=0.
\end{split}
\end{equation} 
Notice that a solution of this can easily be noticed as 
\begin{equation}
W=a\dot{x}+b\dot{y}.
\end{equation}
\subsection{Second order Lagrangian systems with affine dependence on the acceleration}

In this subsection we are employing the theoretical parts presented in the previous section to the particular case of second order Lagrangian theories with affine dependence on the acceleration. To this end, we first define the following generic Lagrangian function 
\begin{equation} \label{Lag-Affine}
{L}(  q_{\left( 0\right) }, q_{\left( 1\right)
}, q_{\left( 2\right)})=f_A(q_{\left( 0\right) }, q_{\left( 1\right)
})q_{\left( 2\right)}^A+g(q_{\left( 0\right) }, q_{\left( 1\right)
})
\end{equation} 
on the second order tangent bundle $T^2Q$ where $f_A$ and $g$ are functions depending only on the position and the velocity. 

\bigskip

\noindent \textbf{The Morse family method - Ostrogradsky Momenta}.
Let us start with the first approach by introducing the Ostrogradsky momenta $(p^{(0)}_A,p^{(1)}_A)$ as the fiber coordinates of $T^*TQ$. Then the energy function take the form
\begin{equation}\label{}
E=p^{(0)}_Aq_{(1)}^A+p^{(1)}_Aq_{(2)}^A-L=p^{(0)}_Aq_{(1)}^A+p^{(1)}_Aq_{(2)}^A-f_A(q_{\left( 0\right) }, q_{\left( 1\right)
})q_{\left( 2\right)}^A-g(q_{\left( 0\right) }, q_{\left( 1\right)
}).
\end{equation} 
Now, let us introduce an exact one-form
\begin{equation}
\gamma=dW(q_{\left( 0\right) },q_{\left( 1\right) })=\frac{\partial W}{\partial q_{\left( 0\right) }^A}dq_{\left( 0\right) }^A+\frac{\partial W}{\partial q_{\left( 1\right) }^A}dq_{\left( 1\right) }^A
\end{equation}
 as given \eqref{gamma-k-1} and study the system of Hamilton-Jacobi equations (\ref{HJfor2ndOr2--}). In this case we have that 
\begin{align} \label{HJ-2-Affine}
\begin{aligned}
\frac{\partial^2 W} {\partial q^B_{(0)} \partial q_{\left( 0\right) }^A} {q_{(1)}^A}
+
\frac{\partial^2 W} {\partial q^B_{(0)} \partial q_{\left( 1\right) }^A} {q_{(2)}^A}
&=\frac{\partial f_A}{\partial q^B_{(0)}}q^A_{(2)}+\frac{\partial g}{\partial q^B_{(0)}}
\\
\frac{\partial W} {\partial q_{\left( 0\right) }^B} 
+
\frac{\partial^2 W} {\partial q^B_{(1)} \partial q_{\left( 0\right) }^A} {q_{(1)}^A}
+
\frac{\partial^2 W} {\partial q^B_{(1)} \partial q_{\left( 1\right) }^A} {q_{(2)}^A}
&=\frac{\partial f_A}{\partial q^B_{(1)}}q^A_{(2)}+\frac{\partial g}{\partial q^B_{(1)}}
\\
\frac{\partial W}{\partial q^A_{(1)}}&= f_A(q_{(0)},q_{(1)}).
\end{aligned}
\end{align}
Consider now the third equation in the system (\ref{HJ-2-Affine}). Taking the partial derivative of this with respect to $q^B_{(1)}$ result with the following equality
\begin{equation}
\frac{\partial^2 W}{\partial q^A_{(1)}\partial q^B_{(1)}}= \frac{\partial f_A}{\partial q^B_{(1)}}(q_{(0)},q_{(1)}).
\end{equation}
Notice that the left hand side is symmetric with respect to the indices $A$ and $B$ whereas this is not generally true for an arbitrary functions $f_A$. This a first restriction to the application of the former theory. Even though there are numerous physical systems satisfying this symmetry criteria in the literature. There are also interesting physical models involving affine terms violating this symmetry. We provide two example important examples for such kind of systems in the conclusions section by pointing out some possible future works.

We can further investigate more on the integrability of the Hamilton-Jacobi equations.
To this end, we substitute the last line of the system (\ref{HJ-2-Affine}) into the first two equations. This reads 
\begin{align} \label{HJ-2-Affine-2}
\begin{aligned}
\frac{\partial^2 W} {\partial q^B_{(0)} \partial q_{\left( 0\right) }^A} {q_{(1)}^A}
&= \frac{\partial g}{\partial q^B_{(0)}}
\\
\frac{\partial W} {\partial q_{\left( 0\right) }^B} 
&=\frac{\partial g}{\partial q^B_{(1)}}-\frac{\partial f_A}{\partial q^B_{(0)}}q^A_{(1)}
\end{aligned}
\end{align}
Taking the partial derivative of the second line with respect to $\partial q_{\left( 0\right) }^A$ , multiplying by ${q_{(1)}^A}$, we arrive at the following differential equation
\begin{equation}
\frac{\partial g}{\partial q^B_{(0)}}-\frac{\partial^2 g}{\partial q^B_{(1)}\partial q^A_{(0)}}q_{\left( 1\right) }^A+ \frac{\partial^2 f_C}{\partial q^B_{(0)}\partial q_{\left( 0\right)}^A}q^C_{(1)}q_{\left( 1\right)}^A=0.
\end{equation}  
This is an integrability criterion for the HJ problem for second order Lagrangian fomalisms that are affine in acceleration. Assuming that this holds, the Hamilton-Jacobi problem can be written in a relatively easy form
\begin{align} \label{HJ-Affine-3}
\begin{aligned}
\frac{\partial W} {\partial q_{\left( 0\right) }^B} 
&=\frac{\partial g}{\partial q^B_{(1)}}-\frac{\partial f_A}{\partial q^B_{(0)}}q^A_{(1)}, \\
\frac{\partial W}{\partial q^A_{(1)}}&= f_A(q_{(0)},q_{(1)}).
\end{aligned}
\end{align} 

\noindent \textbf{The Morse family method - The Schmidt's method}. In this case, we shall start with the Lagrangian function (\ref{Lag-Affine}) once more, but in this case we investigate the associated Hamilton-Jacobi problem by means of the Schmidt Legendre transformation in the framework of the acceleration bundle. By choosing the auxiliary function $F=\delta_{AB} q^A_{(1)}m^B_{(0)}$ we write the equivalent Lagrangian function exhibited in (\ref{L-2-deg}) as follows
\begin{equation}
L_{2-deg}=f_A(q_{\left( 0\right) }, q_{\left( 1\right)
})a_{\left(0\right)}^A+g(q_{\left( 0\right) }, q_{\left( 1\right)
})+\delta_{AB} q^A_{(1)}m^B_{(1)}+\delta_{AB} a^A_{(0)}m^B_{(0)}
\end{equation}
which depends on the base components $(q_{\left( 0\right) },a_{\left( 0\right) },m_{\left( 0\right) })$ along with the velocities $(q_{\left( 1\right) },a_{\left( 1\right) },m_{\left( 1\right) })$. In this case, the energy function (\ref{totHam----}) turns out to be 
$$
E= p^{(0)}_A {q}_{(1)}^A+\pi^{(0)}_A{a}_{(1)}^A+\mu^{(0)}_A {m}_{(1)}^A-f_A(q_{\left( 0\right) }, q_{\left( 1\right)
})a_{\left(0\right)}^A-g(q_{\left( 0\right) }, q_{\left( 1\right)
})-\delta_{AB} q^A_{(1)}m^B_{(1)}-\delta_{AB} a^A_{(0)}m^B_{(0)}
$$
Assuming an exact one-form
\begin{equation}
\gamma=dW(q_{\left( 0\right) },a_{\left( 0\right) },m_{\left( 0\right) })=\frac{\partial W}{\partial q_{\left( 0\right) }^A}dq_{\left( 0\right) }^A+\frac{\partial W}{\partial a_{\left( 0\right) }^A}da_{\left( 0\right) }^A+\frac{\partial W}{\partial m_{\left( 0\right) }^A}dm_{\left( 0\right) }^A
\end{equation}
the Hamilton-Jacobi equation (\ref{HJ-L-2-deg}) turns out to be
\begin{equation}
\begin{split}
\frac{\partial^2 W}{\partial q_{\left( 0\right) }^A \partial q_{\left( 0\right) }^B}  q_{\left( 1\right) }^A
+
\frac{\partial^2 W}{\partial a_{\left( 0\right) }^A \partial q_{\left( 0\right) }^B}  a_{\left( 1\right) }^A
+
\frac{\partial^2 W}{\partial m_{\left( 0\right) }^A \partial q_{\left( 0\right) }^B}  m_{\left( 1\right) }^A
&=
\frac{\partial f_A}{\partial q_{\left( 0\right) }^B}  a_{\left( 0\right) }^A 
+ \frac{\partial g}{\partial q_{\left( 0\right) }^B}
\\
\frac{\partial^2 W}{\partial q_{\left( 0\right) }^A \partial a_{\left( 0\right) }^B}  q_{\left( 1\right) }^A
+
\frac{\partial^2 W}{\partial a_{\left( 0\right) }^A \partial a_{\left( 0\right) }^B}  a_{\left( 1\right) }^A
+
\frac{\partial^2 W}{\partial m_{\left( 0\right) }^A \partial a_{\left( 0\right) }^B}  m_{\left( 1\right) }^A
&=
f_B+\delta_{AB}m^A_{\left( 0\right) }
\\
\frac{\partial^2 W}{\partial q_{\left( 0\right) }^A \partial m_{\left( 0\right) }^B}  q_{\left( 1\right) }^A
+
\frac{\partial^2 W}{\partial a_{\left( 0\right) }^A \partial m_{\left( 0\right) }^B}  a_{\left( 1\right) }^A
+
\frac{\partial^2 W}{\partial m_{\left( 0\right) }^A \partial m_{\left( 0\right) }^B}  m_{\left( 1\right) }^A &=\delta_{AB}a^A_{\left( 0\right) }
\\
\frac{\partial W}{\partial q_{\left( 0\right) }^A }
&=
\frac{\partial f_B}{\partial q_{\left( 1\right) }^A}
 a_{\left( 0\right) }^B
+
\frac{\partial g}{\partial q_{\left( 1\right) }^A}
+
 \delta_{AB}m_{\left( 1\right) }^B
 \\
 \frac{\partial W}{\partial a_{\left( 0\right) }^A }&=0
 \\
 \frac{\partial W}{\partial m_{\left( 0\right) }^A }&=\delta_{AB}  q_{\left( 1\right) }^B.
\end{split}
\end{equation}
From the fifth line we see that $W$ does not depend on $a_{(0)}$. So that, the second line determines the identity $f_B=-\delta_{AB}m^A_{(0)}$. From the fourth and sixth equations, we substitute the Lagrange multipliers $m_{\left( 1\right) }^B$ and $q_{\left( 1\right) }^B$, in to the rest of the equations and we arrive at the following reduced system
\begin{equation}
\begin{split}
\delta^{AC}\frac{\partial^2 W}{\partial q_{\left( 0\right) }^A \partial q_{\left( 0\right) }^B}   \frac{\partial W}{\partial m_{\left( 0\right) }^C }
+
\delta^{AC}\frac{\partial^2 W}{\partial m_{\left( 0\right) }^A \partial q_{\left( 0\right) }^B} 
\left(
\frac{\partial W}{\partial q_{\left( 0\right) }^C }
-\frac{\partial f_D}{\partial q_{\left( 1\right) }^C}
 a_{\left( 0\right) }^D
-
\frac{\partial g}{\partial q_{\left( 1\right) }^C}
\right)
&=
\frac{\partial f_A}{\partial q_{\left( 0\right) }^B}  a_{\left( 0\right) }^A 
+ \frac{\partial g}{\partial q_{\left( 0\right) }^B}
\\
\delta^{AC}
\frac{\partial^2 W}{\partial q_{\left( 0\right) }^A \partial m_{\left( 0\right) }^B}  
\frac{\partial W}{\partial m_{\left( 0\right) }^C }
+
\delta^{AC}\frac{\partial^2 W}{\partial m_{\left( 0\right) }^A \partial m_{\left( 0\right) }^B}  
\left(
\frac{\partial W}{\partial q_{\left( 0\right) }^C }
-\frac{\partial f_D}{\partial q_{\left( 1\right) }^C}
 a_{\left( 0\right) }^D
-
\frac{\partial g}{\partial q_{\left( 1\right) }^C}
\right) &=\delta_{AB}a^A_{\left( 0\right) }.
\end{split}
\end{equation}
In this case, we have arrived at a relatively complicated PDE system comparing with the Ostrogradsky method. Indeed, the choice of one of the two methods is important for resolving the equations.

\subsection{Third order Lagrangian systems with affine dependence on the acceleration}

In this subsection, in order to exhibit the application area of the theoretical framework we have proposed, we shall investigate possible Hamilton-Jacobi realization of some class of the third order singular Lagrangian systems involving affine dependence to the third order derivative terms in the following form 
 \begin{equation} \label{Lag-Affine-3}
{L}(  q_{\left( 0\right) }, q_{\left( 1\right)
}, q_{\left( 2\right)})=f_A(q_{\left( 1\right) }, q_{\left( 2\right)
})q_{\left( 3\right)}^A+g(q_{\left( 0\right) }, q_{\left( 1\right)
})
\end{equation} 
on the third order tangent bundle $T^3Q$.

\bigskip

\noindent \textbf{The Morse family method - Ostrogradsky Momenta}. The energy function generating the dynamics of the Lagrangian \eqref{Lag-Affine-3} is
\begin{equation}\label{E-Affine-3}
E=p^{(0)}_Aq_{(1)}^A+p^{(1)}_Aq_{(2)}^A+p^{(2)}_Aq_{(3)}^A-f_A(q_{\left( 1\right) }, q_{\left( 2\right)
})q_{\left( 3\right)}^A-g(q_{\left( 0\right) }, q_{\left( 1\right)
}),
\end{equation} 
where $(p^{(0)},p^{(1)},p^{(2)})$ are the conjugate momenta defining the fiber coordinates of the cotangent bundle $T^*T^2Q$. Consider an exact one-form on $T^2Q$ which is in coordinates given by
\begin{equation}
\gamma=dW(q_{\left( 0\right) },q_{\left( 1\right) },q_{\left( 2\right) })=\frac{\partial W}{\partial q_{\left( 0\right) }^A}dq_{\left( 0\right) }^A+\frac{\partial W}{\partial q_{\left( 1\right) }^A}dq_{\left( 1\right) }^A
+
\frac{\partial W}{\partial q_{\left( 2\right) }^A}dq_{\left( 2\right) }^A
\end{equation}
following \eqref{gamma-k-1}. We write the system of Hamilton-Jacobi equations (\ref{HJfor2ndOr2--}) as follows 
\begin{equation} \label{Affine-3-HJ}
\begin{split}
\frac{\partial ^2 W}{\partial q_{\left( 0\right) }^A \partial q_{\left( 0\right) }^B}
 q_{\left( 1\right) }^A
+
\frac{\partial ^2 W}{\partial q_{\left( 1\right) }^A \partial q_{\left( 0\right) }^B}
 q_{\left( 2\right) }^A
+
\frac{\partial ^2 W}{\partial q_{\left( 2\right) }^A \partial q_{\left( 0\right) }^B}
 q_{\left( 3\right) }^A
&=
\frac{\partial g}{\partial q_{\left( 0\right) }^B}
\\
\frac{\partial  W}{\partial q_{\left( 0\right) }^B}
+
\frac{\partial ^2 W}{\partial q_{\left( 0\right) }^A \partial q_{\left( 1\right) }^B}
 q_{\left( 1\right) }^A
+
\frac{\partial ^2 W}{\partial q_{\left( 1\right) }^A \partial q_{\left( 1\right) }^B}
 q_{\left( 2\right) }^A
+
\frac{\partial ^2 W}{\partial q_{\left( 2\right) }^A \partial q_{\left( 1\right) }^B}
 q_{\left( 3\right) }^A
 &=
 \frac{\partial f_A}{\partial q_{\left( 1\right) }^B}q_{\left( 3\right) }^A
 +
 \frac{\partial g}{\partial q_{\left( 1\right) }^B}
 \\
\frac{\partial  W}{\partial q_{\left( 1\right) }^B}
+
\frac{\partial ^2 W}{\partial q_{\left( 0\right) }^A \partial q_{\left( 2\right) }^B}
 q_{\left( 1\right) }^A
+
\frac{\partial ^2 W}{\partial q_{\left( 1\right) }^A \partial q_{\left( 2\right) }^B}
 q_{\left( 2\right) }^A
+
\frac{\partial ^2 W}{\partial q_{\left( 2\right) }^A \partial q_{\left( 2\right) }^B}
 q_{\left( 3\right) }^A
 &=
 \frac{\partial f_A}{\partial q_{\left( 2\right) }^B}q_{\left( 3\right) }^A
\\
\frac{\partial W}{\partial q_{\left( 2\right) }^B}&=f_B.
\end{split}
\end{equation}
Let us try to simplify this system. See that the last line reads that ${\partial W}/{\partial q_{\left( 2\right) }^B}$ is independent of $q_{(0)}$, and leads to the observation that $\partial f_B/\partial q_{\left( 2\right) }^A$ must be symmetric with respect to the indices $A$ and $B$. Substitution of the last identity in (\ref{Affine-3-HJ}) into the second and third lines we arrive at a fairly more simple system
\begin{equation}  \label{Affine-3-HJ-red}
\begin{split}
\frac{\partial W}{\partial q_{\left( 0\right) }^B}
&= 
 \frac{\partial g}{\partial q_{\left( 1\right) }^B}
+
\frac{\partial ^2 f_B}{\partial q_{\left( 1\right) }^A \partial q_{\left( 1\right) }^C}
 q_{\left( 2\right) }^A 
 q_{\left( 2\right) }^C, \\ \frac{\partial W}{\partial q_{\left( 1\right) }^B}&=-
\frac{\partial f_B}{\partial q_{\left( 1\right) }^C }
 q_{\left( 2\right) }^C, \\ \frac{\partial W}{\partial q_{\left( 2\right) }^B}&=f_B
\end{split}
\end{equation}
See that, this system is coupled with the first line of the system (\ref{Affine-3-HJ}). So that substitution of \eqref{Affine-3-HJ-red} into the first line of (\ref{Affine-3-HJ}) must be identically satisfied. This compatibility condition reads
\begin{equation}
 \frac{\partial ^2 g}{\partial q_{\left( 0\right) }^A q_{\left( 1\right) }^B}
  q_{\left( 1\right) }^A
+
\frac{\partial  ^3 f_B}{\partial q_{\left( 0\right) }^A \partial q_{\left( 1\right) }^D \partial q_{\left( 1\right) }^C}
 q_{\left( 2\right) }^D 
 q_{\left( 2\right) }^C
 q_{\left( 1\right) }^A
 -
\frac{\partial ^2 f_B}{\partial q_{\left( 1\right) }^C \partial q_{\left( 1\right) }^A}
 q_{\left( 2\right) }^C q_{\left( 2\right) }^A=
\frac{\partial g}{\partial q_{\left( 0\right) }^B}.
\end{equation}  
It is also important to know that one only needs to perform direct integration to find $W$ after the functions $f_A$ and $g$ are determined. But to do this, $f_A$ and $g$ can not be arbitrarily chosen since they have to satisfy the integrability conditions arising form the system. 

\section{Conclusions and comments}

In this paper, we are proposing a Hamilton--Jacobi theory for higher order Lagrangian formalisms. Our theory works well for all non-degenerate systems and a large class of degenerate theories. The implicit character of singular systems has been studied in two different forms: one is making use of Morse families that play the role of the Hamiltonian, and giving rise to Lagrangian submanifolds, equivalently to the image of $\gamma$, which denotes the solution of a Hamilton--Jacobi problem. The other method consists on the local construction of a vector field associated with the implicit equations and defined on a proper domain compatible with implicit character.
The higher order derivatives are studied through both the Ostrogradsky-Legendre and Schmidt-Legendre transformations. In the case of second order Lagrangians we have employed the acceleration bundle picture.

As a future work, we want to generalize this formalism in a proper way, which would enable us to work all degenerate higher order Lagrangian systems: singular higher order Lagrangians coming from the gravitational theory. We will mostly be interested in two examples. One is the {chiral oscillator} in two dimensions.
This oscillator accounts for mirror symmetry, and in the case of a non-relativistic oscillator with a Chern-Simons term (independent of the metric), we have the expression: 
\begin{equation} \label{Lag-CO}
L=-\lambda\epsilon_{AB} q_{(1)}^A q_{(2)}^A + \frac{m}{2} \delta_{AB}q_{(1)}^A q_{(1)}^B
\end{equation}
where $\lambda$ and $m$ are nonvanishing constants \cite{CrGoMoRo16}. Here, $\epsilon_{AB }$ is a skew-symmetric tensor with $\epsilon_{12}=1$. The Lagrangian (\ref{Lag-CO}) is quasi-invariant
under the Galilean transformations.
The second example is Cl\'{e}ment Lagrangian which is a second order degenerate Lagrangian function \cite{Cl94}. It is defined on the second order tangent bundle $T^2Q$ where $Q$ is a semi-Riemannian manifold equipped with the
Minkowskian metric $\theta=[\theta_{AB}]$ with $(+,-,-)$. The Cl\'{e}ment Lagrangian is given by
\begin{equation}
L=-\frac{m}{2}\zeta \theta_{AB} {q}_{(1)}^A {q}_{(1)}^B-\frac{2m\Lambda }{\zeta }+\frac{\zeta
^{2}}{2\mu m} \epsilon_{ABC}  {q}_{(0)}^A {q}_{(1)}^B {q}_{(2)}^C
\label{clemlag}
\end{equation}%
where $\zeta =\zeta (t)$ is a function that allows arbitrary
reparametrizations of the variable $t$, whereas $\Lambda $ and $1/2m$ are the
cosmological and Einstein gravitational constants, respectively. Here, $\epsilon_{ABC}$ is a skewsymmetric three tensor determining the triple product, so this Lagrangian falls into the category of Lagrangians depending on the acceleration linearly \cite{CrGoMoRo16}. For the Hamiltonian analysis of this singular theory, we cite \cite{CaEsGu18}.

\section*{Acknowledgments}

This work has been partially supported by MINECO Grants MTM2016-
76-072-P and the ICMAT Severo Ochoa projects SEV-2011-0087 and SEV2015-0554. 



\end{document}